\def\ps@headings{%
\def\@oddhead{\mbox{}\scriptsize\rightmark \hfil \thepage}%
\def\@evenhead{\scriptsize\thepage \hfil \leftmark\mbox{}}%
\def\@oddfoot{}%
\def\@evenfoot{}}
\theoremstyle{plain}
  \newtheorem{theorem}{Theorem}[section]
  \newtheorem{lemma}[theorem]{Lemma}
  \newtheorem{proposition}[theorem]{Proposition}
  \newtheorem{corollary}[theorem]{Corollary}
\theoremstyle{definition}
  \newtheorem{definition}[theorem]{Definition}
  \newtheorem{example}[theorem]{Example}
  \newtheorem{remark}[theorem]{Remark}
\newcommand{\mrt}{\mathsf{minrk}_2}
\newcommand{\mrq}{\mathsf{minrk}_q}
\newcommand{\A}{{\mathcal A}}
\newcommand{\C}{{\mathcal C}}
\newcommand{\D}{{\mathcal D}}
\newcommand{\E}{{\mathcal E}}
\newcommand{\F}{{\mathcal F}}
\newcommand{\G}{{\mathcal G}}
\newcommand{\LL}{{\mathcal L}}
\newcommand{\Q}{{\mathcal Q}}
\newcommand{\X}{{\mathcal X}}
\newcommand{\V}{{\mathcal V}}
\newcommand{\N}{{\mathcal N}}
\newcommand{\bM}{{\boldsymbol{M}}} 
\newcommand{\bJ}{{\boldsymbol J}}
\newcommand{\bB}{{\boldsymbol B}}
\newcommand{\supp}{{\sf supp}}
\newcommand{\rank}{{\mathsf{rank}_q}}
\newcommand{\rankt}{{\mathsf{rank}_2}}
\newcommand{\define}{\stackrel{\mbox{\tiny $\triangle$}}{=}}
\newcommand{\bu}{{\boldsymbol u}}
\newcommand{\bO}{{\boldsymbol 0}}
\newcommand{\bx}{{\boldsymbol{x}}}
\newcommand{\be}{{\boldsymbol e}}
\newcommand{\aG}{{\alpha(\G)}}
\newcommand{\mG}{\mathsf{mm}(\G)}
\newcommand{\cc}{{\sf{cc}}}
\newcommand{\cG}{{\sf{cc}(\G)}}
\newcommand{\Gc}{\overline{{\mathcal{G}}}}
\newcommand{\VG}{\mathcal{V}(\mathcal{G})}
\newcommand{\EG}{\mathcal{E}(\mathcal{G})}
\newcommand{\aD}{{\alpha(\D)}}
\newcommand{\cD}{{\sf{cc}(\D)}}
\newcommand{\Dc}{\overline{{\mathcal{D}}}}
\newcommand{\VD}{\mathcal{V}(\mathcal{D})}
\newcommand{\VDc}{\mathcal{V}(\overline{\mathcal{D}})}
\newcommand{\ED}{\mathcal{E}(\mathcal{D})}
\newcommand{\EDc}{\mathcal{E}(\overline{\mathcal{D}})}
\newcommand{\nD}{\nu_0(\mathcal{D})}
\newcommand{\tD}{\tau_0(\mathcal{D})}
\newcommand{\neD}{\nu_1(\mathcal{D})}
\newcommand{\teD}{\tau_1(\mathcal{D})}
\newcommand{\LD}{{\mathcal{L}}({\mathcal{D}})}
\newcommand{\al}{\alpha}
\newcommand{\bt}{\beta}
\newcommand{\kp}{\kappa}
\newcommand{\nin}{\noindent}
\newcommand{\ra}{\rightarrow}
\newcommand{\ff}{\mathbb{F}}
\newcommand{\ft}{\mathbb{F}_2}
\newcommand{\fq}{\mathbb{F}_q}
\newcommand{\et}{{\emph{et al.}}}
\newcommand{\fkE}{{\mathfrak E}}
\title{Optimal Index Codes with Near-Extreme Rates
\thanks{%
This work was supported in part by the National Research Foundation of Singapore (Research Grant NRF-CRP2-2007-03).
This paper was presented in part at the IEEE International Symposium on Information Theory, Cambridge, MA, USA, July 2012.
}
\thanks{%
S. H. Dau was with the Division of Mathematical Sciences, School of Physical and Mathematical Sciences, Nanyang Technological University, 21~Nanyang Link, Singapore 637371. He is now with the SUTD-MIT International Design Centre, Singapore University of Technology and Design, 20 Dover Drive, Singapore 138682
(e-mail: dausonhoang84@gmail.com).
}
\thanks{%
    Y. M. Chee is with the Division of Mathematical Sciences, School of Physical and Mathematical Sciences,
    Nanyang Technological University, 21~Nanyang Link, Singapore 637371  
    (e-mail: ymchee@ntu.edu.sg). 
}
\thanks{% 
V. Skachek was with the Coordinated Science Laboratory, University of Illinois at Urbana-Champaign,
    1308 W. Main Street, Urbana, IL 61801, USA. He is now with the Institute of Computer Science, Faculty of Mathematics and 
		Computer Science, University of Tartu, J. Liivi 2-216, Tartu 50409, Estonia (e-mail: vitaly.skachek@ut.ee).			
}
} 
\author{Son Hoang Dau, Vitaly Skachek, and Yeow Meng Chee, \emph{Senior Member}, \emph{IEEE}}
\begin{document}
\maketitle

\begin{abstract}
The \emph{min-rank} of a \emph{digraph} was shown by Bar-Yossef~{\et}~(2006) to represent the length of an optimal scalar linear solution of the corresponding instance of the Index Coding with Side Information (ICSI) problem. 
In this work, the graphs and digraphs of near-extreme min-ranks are characterized. 
Those graphs and digraphs correspond to the ICSI instances having near-extreme transmission rates when using optimal scalar linear index codes. In particular, it is shown that the decision problem whether a digraph has min-rank two is NP-complete. By contrast, the same question for \emph{graphs} can be answered in polynomial time.  

Additionally, a new upper bound on the min-rank of a digraph, the \emph{circuit-packing bound}, is presented. This 
bound is often tighter than the previously known bounds. By employing this new bound, we present several families of digraphs whose min-ranks can be found in polynomial time. 
\end{abstract}

\section{Introduction}

Building communication schemes which allow participants to communicate efficiently
has always been a challenging yet intriguing problem for information theorists. 
Index Coding with Side Information (ICSI) (\cite{BirkKol98, BirkKol2006}) is a 
communication scheme dealing with broadcast channels in which receivers have prior side information 
about the messages to be transmitted. By using coding and exploiting the knowledge about the side information,
the sender may significantly reduce the number of required transmissions compared
with the straightforward approach. As a consequence, the efficiency of communication over 
this type of broadcast channels could be dramatically improved. 
Apart from being a special case of the well-known (non-multicast) Network Coding problem
(\cite{Ahlswede, KoetterMedard2003}), the ICSI
problem has also found various potential applications on its owns, such as audio- and 
video-on-demand, daily newspaper delivery, data pushing, and opportunistic wireless networks
(\cite{BirkKol98, BirkKol2006,Yossef, Rouayheb2009, Katti2006, Katti2008}). 

In the work of Bar-Yossef {\et} \cite{Yossef}, the optimal transmission rate of scalar linear index codes for an ICSI instance
was neatly characterized by the so-called \emph{min-rank} of the side information digraph (i.e., directed graph, see Section~\ref{sec:not_def} for definitions) corresponding to that instance. 
%Note that in this work, we use the terms \emph{graphs} and \emph{digraphs} to refer to \emph{undirected graphs} and \emph{directed graphs}, respectively.    
The concept of min-rank of a graph (i.e., undirected graph, see Section~\ref{sec:not_def} for definitions) goes back to Haemers \cite{Haemers1978}. Min-rank serves as an upper bound for the celebrated Shannon capacity of a graph \cite{Shannon1956}. This upper bound, as pointed out by Haemers, 
although is usually not as good as the Lov\'asz bound \cite{Lovasz1979}, is sometimes tighter and easier to compute.
It was shown by Peeters \cite{Peeters96} that computing the min-rank of a general graph 
(that is, the Min-Rank problem) is a hard task. 
More specifically, Peeters showed that deciding whether the min-rank of a graph 
is smaller than or equal to three is an NP-complete problem.
 
The work of Bar-Yossef {\et}~\cite{Yossef} has stimulated
the interest in the Min-Rank problem. 
Exact and heuristic algorithms for finding min-ranks over the binary field of digraphs were developed in the work of Chaudhry and Sprintson \cite{ChaudhrySprintson}. The min-ranks of random digraphs are investigated by Haviv and Langberg \cite{HavivLangberg2011}. A dynamic programming approach was proposed by Berliner and Langberg~\cite{BerlinerLangberg2011} to compute min-ranks of outerplanar graphs in polynomial time. 
Algorithms to approximate min-ranks of graphs with bounded
min-ranks were studied by Chlamtac and Haviv~\cite{ChlamtacHaviv2011}.

In this paper, we study graphs and digraphs that have near-extreme min-ranks. 
In other words, we study ICSI instances with $n$ receivers for which optimal 
\emph{scalar linear} index codes
have transmission rates $2$, $n-2$, $n-1$, or $n$.
In particular, we show that deciding whether a digraph has min-rank two over the \emph{binary} field is an NP-complete problem. By contrast, a graph has min-rank two over any finite field if and only it is not a complete graph and its complement is bipartite, a condition which can be verified in polynomial time (see, for instance, West~\cite[p. 495]{West}). Very recently, it was found by Maleki {\et}~\cite{Maleki2012} that the same problem for digraph over sufficiently large field can be solved in polynomial time.

The characterizations of graphs and digraphs with near-extreme min-ranks are summarized in the table below. 
The star mark ``$*$'' indicates that the result is established in
this paper. The dagger mark ``$\dagger$'' indicates that 
the result is proved only for the binary field. 
%%%%%%%%%%%%%%%%%%%%%%%%%%%%%%%%%%%%%%%
\begin{table}[H]
	\centering
		\begin{tabular}{|m{1.16cm}|m{3.85cm}|m{2.52cm}|}
			\hline \vspace{2pt}
			Min-Rank & Graph $\G$ & Digraph $\D$\\
			\hline
			$1$ & $\G$ is complete (trivial) & $\D$ is complete (trivial) \\
			\hline \vspace{2pt}
			$2$ & $\G$ is not complete and $\Gc$ is $2$-colorable
			(\cite{Peeters96})
			& $\D$ is not complete and $\Dc$ is fairly $3$-colorable$^{*\dagger}$\\
			\hline \vspace{2pt}
			$n-2$ & $\G$ (connected) has a maximum matching of size two and does not
			contain $F$ (Fig.~\ref{fig:forbidden_subgraph}) as a subgraph$^*$ & unknown \\
			\hline \vspace{2pt}
			$n - 1$ & $\G$ (connected) is a star graph$^*$ & unknown\\
			\hline
			$n$ & $\G$ has no edges (trivial) & $\D$ has no circuits$^*$ \\
			\hline
		\end{tabular}
\end{table}
%%%%%%%%%%%%%%%%%%%%%%%%%%%%%%%%%%%%
The near-extreme cases are of significant interest from both theoretical and practical
points of view. 
It is known that the Min-Rank Problem is NP-hard \cite{Peeters96} ($\mrq(\G) = 3$ is hard to verify). \emph{Theoretically}, it is desirable to further understand, which values of the min-rank in the range between $1$ and $n$ are still easy to verify, and for which values it is hard. It turns out that for graphs and digraphs, the easy-hard turning points are different. For graphs, the turning points are $3$ and some value smaller than $n-2$ (not exactly known). By contrast, for digraphs, the easy-hard turning points are $2$ (proved in this work) and $n-1$ (conjectured).    
\emph{Practically}, the use of length-one index codes in wireless communications has already been proposed (for instance, see COPE~\cite{Katti2006},~\cite{Traskov2009},~\cite{ZhanXu2010}), due to their simplicity and efficiency. However, the variety of scenarios where an index code of length one is applicable is limited (each client must know all except one message). An index
code of length two is obviously the next potential candidate to be used.  

In this paper, we also introduce a new upper bound for the min-rank of a digraph, namely the circuit-packing bound, which, in certain cases, is far tighter than the clique-cover bound. This upper bound was first presented by Chaudhry {\et}~\cite{ChaudhryAsadSprintsonLangberg2011}, and was found independently by the authors of this paper approximately 
at the same time.

So far, families of graphs and digraphs whose min-ranks are either known or computable in polynomial time are the followings. For \emph{graphs}, they are odd holes and odd anti-holes~\cite{Yossef-journal}, perfect graphs~\cite{Yossef-journal}, and outerplanar graphs~\cite{BerlinerLangberg2011}. For \emph{digraphs}, they are acyclic digraphs~\cite{Yossef-journal}. In this work, we point out several new families of \emph{digraphs} for which the circuit-packing bound is tight. For such families of digraphs, min-ranks can be found in polynomial time. 

In the context of index coding, we only study min-ranks of digraphs over a \emph{finite} field $\fq$. However, all of our results, except Theorem~\ref{thm:mr_color}, Corollary~\ref{coro:mr_color}, and Theorem~\ref{thm:NP_complete}, still hold for an \emph{arbitrary} field $\ff$. This is because the characteristic of the field does not play any role in their proofs. 

The paper is organized as follows. 
Basic notation and definitions are presented in Section~\ref{sec:not_def}. 
The ICSI problem is formulated in Section~\ref{sec:icsi_formulation}.
Section~\ref{sec:characterization} is devoted to the characterizations of graphs and digraphs of near-extreme min-ranks. 
We prove the hardness of the Min-Rank problem for digraphs in Section~\ref{sec:MR_hardness}.
The circuit-packing bound is established in Section~\ref{sec:circuit_packing_bound}. 
Finally, some interesting open problems are proposed in Section~\ref{sec:conclusion}. 

\section{Notation and Definitions}
\label{sec:not_def}

Let $[n]$ denote the set of integers $\{1,2,\ldots,n\}$.
Let $\fq$ denote the finite field of $q$ elements and $\fq^* = \fq \setminus \{\bO\}$. 
The \emph{support} of a vector $\bu \in \fq^n$ is defined to be the set $\text{supp}(\bu) = \{i \in [n]: u_i \ne 0\}$. 
For an $n \times k$ matrix $\bM$, let $\bM_i$ denote the $i$th row of $\bM$. 
For a set $E \subseteq [n]$, let $\bM_E$ denote the $|E| \times k$ sub-matrix of 
$\bM$ formed by rows of $\bM$ which are indexed by the elements of $E$.
For any matrix $\bM$ over $\fq$, 
we denote by $\rank(\bM)$ the rank of $\bM$ over $\fq$ (or the \emph{$q$-rank} of $\bM$). 
We use $\be_i$
to denote the unit vector, which has a one at the $i$th position, and zeros elsewhere.

A simple \emph{graph} is a pair $\G = (\VG, \EG)$ where $\VG$ is the set of vertices of $\G$
and $\EG$ is a set of \emph{unordered} pairs of distinct vertices of $\G$. We refer to $\EG$ as the set of \emph{edges} of $\G$. A typical edge of $\G$ is of the form $\{u,v\}$ where $u\in \VG$, $v \in \VG$, and $u \neq v$.
If $e = \{u,v\} \in \EG$ we say that $u$ and $v$ are adjacent. 
We also refer to $u$ and $v$ as the \emph{endpoints} of $e$.

A simple \emph{digraph}\index{digraph} is a pair $\D = (\VD, \ED)$ where $\VD$ is the set of vertices of $\D$, 
and $\ED$ is a set of \emph{ordered} pairs of distinct vertices of $\D$. 
We refer to $\ED$ as the set of arcs (or directed edges) of $\D$.
A typical \emph{arc} of $\D$ is of the form 
$e = (u,v)$ where $u \in \VD$, $v \in \VD$, and $u \neq v$.
The vertices $u$ and $v$ are called the \emph{endpoints} of the arc $e$. 
%The arc $e$ is called an \emph{out-going} arc of $i$ and
%an \emph{in-coming arc} of $j$.

Simple graphs and digraphs have no loops and no parallel edges
and arcs, respectively. 
In the scope of this paper, only simple graphs and digraphs are considered. 
Therefore, we simply refer to them as graphs and digraphs for succinctness.  

The number of vertices $|\VD|$ is called the \emph{order} of $\D$,
whereas the number of arcs $|\ED|$ is called the \emph{size} of $\D$.
The \emph{complement} of a digraph $\D$, denoted by $\Dc$, is defined as follows.
The vertex set is $\VDc = \VD$. The arc set is
\[
\EDc = \big\{(u,v): \ u, v \in \VD, \ u \neq v, \ (u,v) \notin \ED \big\}.
\] 
Analogous concepts are also defined for graphs. 

A digraph $\D$ is called \emph{symmetric} if it satisfies the property that 
$(u,v) \in \ED$ if and only if $(v,u) \in \ED$.
A symmetric digraph can be viewed as a graph, and vice versa.  
A \emph{complete graph} is a graph that contains all possible edges. A \emph{complete digraph} is a digraph that contains all possible arcs. 

A collection of subsets $V_1, V_2, \ldots, V_k$ of a set $V$ is said to 
\emph{partition} $V$ if $\cup_{i = 1}^k V_i = V$ and $V_i \cap V_j = \varnothing$
for every $i \neq j$. In that case, $[V_1, V_2, \ldots, V_k]$ is referred
to as a partition of $V$, and $V_i$'s ($i \in [k]$) are called \emph{parts}
of the partition.

A graph $\G$ is called \emph{bipartite} if $\VG$ can 
be partitioned into two subsets $U$ and $V$ such that for every edge $\{u,v\} \in \EG$, 
it holds that $u \in U$ and $v \in V$, or vice versa.

A \emph{subgraph} of a graph $\G$ is a graph whose vertex set $V$ is a subset of that of $\G$ and whose edge set is a subset of that of $\G$ restricted to the vertices in $V$. 
Let $V$ be a subset of vertices in $\VG$. The subgraph of $\G$ \emph{induced} by $V$ is a graph whose vertex set is $V$, and edge set is $\{\{u,v\}: \ u \in V,\ v \in V, \ \{u,v\} \in \EG\}$. We refer to such a graph as an \emph{induced subgraph} of $\G$.
A subgraph and induced subgraph of a digraph can be defined in a similar manner. 

A \emph{path} in a graph $\G$ is a sequence of distinct vertices $(v_1,v_2,\ldots,v_r)$, such that $\{v_s,v_{s+1}\} \in \EG$ for all $s \in [r-1]$.
A \emph{directed path} in a digraph $\D$ is a sequence of distinct vertices $(v_1,v_2,\ldots,v_r)$, such that $(v_s,v_{s+1}) \in \ED$, for all $s \in [r-1]$.

A \emph{circuit} in a digraph $\D$ is a sequence of pairwise distinct vertices 
\[
\C = (v_1, v_2, \ldots, v_r),
\] 
where $(v_s,v_{s+1}) \in \ED$ for all $s \in [r - 1]$ and $(v_r,v_1) \in \ED$ as well. 
A digraph is called \emph{acyclic} if it contains no circuits.

A graph is called \emph{connected} if there is a path from each vertex in the graph to every other vertex.
The \emph{connected components} of a graph are its maximal connected subgraphs.
Similarly, a digraph is called \emph{strongly connected} if there is a directed path from each vertex in the graph to every other vertex.
The \emph{strongly connected components} of a digraph are its maximal strongly connected subgraphs.

If $(u,v)$ is an arc in a digraph $\D$, then $v$ is called an \emph{out-neighbor} of $u$
in $\D$.
The set of out-neighbors of a vertex $u$ in a digraph $\D$ is denoted by $N^\D_O(u)$.
We simply use $N_O(u)$ whenever there is no potential confusion.
We also denote by $N^\G(u)$ the set of neighbors of $u$ in a graph $\G$, namely, the set of vertices adjacent to $u$ in $\G$.

An \emph{independent set} in a graph $\G$ is a set of vertices of $\G$ 
with no edges connecting any two of them. An independent set in $\G$ of largest
cardinality is called a \emph{maximum independent set} in $\G$. 
The cardinality of such a maximum independent set is referred to as 
the \emph{independence number} of $\G$, denoted by $\aG$. We also use $\aD$ to denote the size of a maximum acyclic induced subgraph of a digraph $\D$
for the following reason.  
For a symmetric digraph $\D$, $\aD$ is equal to
the size of a maximum independent set if $\D$ is regarded as a graph. 

A clique of a graph is a set of vertices
that induces a complete subgraph of that graph.
A \emph{clique cover} of a graph is a set of cliques that
partition its vertex set.
A \emph{minimum clique cover} of a graph
is a clique cover with the minimum number of cliques. The number of cliques in 
such a minimum clique cover of a graph is called the clique cover number of that graph.
Similar concepts are defined for digraphs. 
We denote by $\cG$ the clique cover number of a graph $\G$ and $\cD$ the clique cover number 
of a digraph $\D$.

\section{The Index Coding with Side Information Problem}
\label{sec:icsi_formulation}

The ICSI problem is formulated as follows. 
Suppose a sender $S$ wants to send a vector $\bx = (x_1,x_2,\ldots,x_n)$, where $x_i \in \fq^t$ for all $i\in [n]$, to $n$ receivers $R_1,R_2,\ldots, R_n$. Each $R_i$ possesses some prior side information, consisting of the blocks $x_j$, $j \in \X_i \subsetneq [n]$, and is interested in receiving a single block $x_i$. The sender $S$ broadcasts a codeword $\fkE(\bx) \in \fq^\kp$, 
where $\kp$ is some positive integer, that enables each receiver $R_i$ to recover $x_i$ based on its side information. Such a mapping $\fkE: \fq^{nt} \ra \fq^{\kappa}$ is called an \emph{index code}. We refer to $t$ as the \emph{block length} and $\kp$ as the \emph{length} of the index code. The ratio $\kp/t$ is called the \emph{transmission rate} of the index code. The objective of $S$ is to find an \emph{optimal} index code, that is, an index code which has the minimum transmission rate. The index code is called \emph{linear} if $\fkE$ is a linear mapping, and \emph{nonlinear} otherwise. The index code is called \emph{scalar} if $t = 1$ and \emph{vector} if $t > 1$. The length and the transmission rate of a scalar index code ($t=1$) are identical.      

\vskip 10pt 
\begin{example}
\label{ex:icsi_instance}
Consider the following ICSI instance (Fig.~\ref{fig:ex_icsi}). 
There are five receivers ($n = 5$). 
We only consider scalar index codes in this example. 
Suppose that $x_i \in \ft$, $i \in [5]$, are five messages available
from $S$. 
For each $i \in [5]$, the receiver $R_i$ requests $x_i$ and owns certain messages 
as a priori. We have here $\X_1 = \{2\}$, $\X_2 = \{3\}$, $\X_3 = \{1,4\}$, 
$\X_4 = \{5\}$, and $\X_5 = \{2,4\}$. 

\begin{figure}[H]
\centering
\scalebox{1} % Change this value to rescale the drawing.
{
\begin{pspicture}(0,-2.764237)(8.529688,1.4335759)
\pscircle[linewidth=0.04,dimen=outer](4.0581245,1.0435759){0.39}
\usefont{T1}{ptm}{m}{n}
\rput(3.992344,1.0235758){$S$}
\pscircle[linewidth=0.04,dimen=outer](1.098125,-0.7364242){0.39}
\usefont{T1}{ptm}{m}{n}
\rput(1.0323437,-0.7364242){$R_1$}
\pscircle[linewidth=0.04,dimen=outer](2.4581249,-1.4164243){0.39}
\usefont{T1}{ptm}{m}{n}
\rput(2.4123437,-1.3964243){$R_2$}
\pscircle[linewidth=0.04,dimen=outer](3.998125,-1.5564243){0.39}
\usefont{T1}{ptm}{m}{n}
\rput(3.9523437,-1.5164243){$R_3$}
\pscircle[linewidth=0.04,dimen=outer](5.5781255,-1.4164243){0.39}
\usefont{T1}{ptm}{m}{n}
\rput(5.5323434,-1.3764243){$R_4$}
\pscircle[linewidth=0.04,dimen=outer](7.058125,-0.7764243){0.39}
\usefont{T1}{ptm}{m}{n}
\rput(7.0123434,-0.7564242){$R_5$}
\usefont{T1}{ptm}{m}{n}
\rput(1.0676563,-1.2964243){requests $x_1$}
\usefont{T1}{ptm}{m}{n}
\rput(0.86265624,-1.6564243){owns $x_2$}
\usefont{T1}{ptm}{m}{n}
\rput(2.3476562,-2.0764244){requests $x_2$}
\usefont{T1}{ptm}{m}{n}
\rput(2.142656,-2.4364243){owns $x_3$}
\usefont{T1}{ptm}{m}{n}
\rput(4.107656,-2.1764243){requests $x_3$}
\usefont{T1}{ptm}{m}{n}
\rput(4.1926565,-2.5364244){owns $x_1,x_4$}
\usefont{T1}{ptm}{m}{n}
\rput(5.787656,-1.9764242){requests $x_4$}
\usefont{T1}{ptm}{m}{n}
\rput(5.5826564,-2.3364244){owns $x_5$}
\usefont{T1}{ptm}{m}{n}
\rput(7.2276564,-1.3764243){requests $x_5$}
\usefont{T1}{ptm}{m}{n}
\rput(7.3126564,-1.7364242){owns $x_2,x_4$}
\rput{-179.40352}(8.0204525,1.6232808){\psarc[linewidth=0.04](4.014451,0.7907659){0.7917863}{41.335026}{140.38962}}
\rput{-179.40352}(7.917033,2.201883){\psarc[linewidth=0.04](3.9642472,1.0803362){1.6270506}{42.013943}{137.06087}}
\rput{-179.40352}(8.104917,1.3452986){\psarc[linewidth=0.04](4.0559597,0.651555){0.2680074}{26.714457}{147.39537}}
\usefont{T1}{ptm}{m}{n}
\rput(5.5523434,0.9635757){$x_1+x_2$}
\usefont{T1}{ptm}{m}{n}
\rput(5.5523434,0.60357577){$x_2+x_3$}
\usefont{T1}{ptm}{m}{n}
\rput(5.5523434,0.22357577){$x_4+x_5$}
\end{pspicture} 
}

\caption{Example of an ICSI instance}
\label{fig:ex_icsi}
\end{figure}
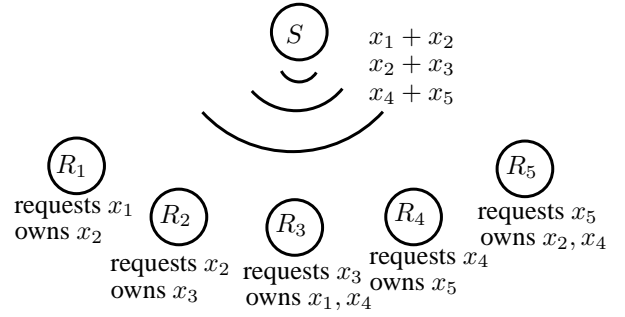

On the one hand, $S$ can satisfy the demands from all receivers 
simultaneously in a straightforward way by broadcasting all
five messages $x_i$'s, $i \in [5]$. This na\"{i}ve solution 
costs \emph{five} transmissions. On the other hand, a smarter
solution for $S$ is to broadcast \emph{three} packets $x_1+x_2$, 
$x_2 + x_3$, and $x_4+x_5$. This index code is of length three. 
The decoding process goes as follows. 
Since $R_1$ already knows $x_2$, it obtains $x_1$ by adding
$x_2$ to the first packet $x_1 + x_2$:
\[
x_1 = x_2 + (x_1+ x_2).
\]
Similarly, $R_2$ obtains $x_2 = x_3 + (x_2+ x_3)$; $R_3$ obtains 
$x_3 = x_1 + (x_1 + x_2) + (x_2 + x_3)$; $R_4$ obtains 
$x_4 = x_5 + (x_4 + x_5)$; $R_5$ obtains $x_5 = x_4 + (x_4 + x_5)$.
\end{example}
\vskip 10pt 

Each instance of the ICSI problem can be described by the so-called \emph{side information digraph} \cite{Yossef}. Given $n$ and $\X_i$, $i \in [n]$, the \emph{side information digraph} $\D = (\VG, \ED)$ is defined as follows. The vertex set $\VD = \{u_1,u_2,\ldots,u_n\}$. The edge set $\ED = \cup_{i \in [n]} \big\{(u_i,u_j):\ j \in \X_i \big\}$. 
Sometimes we simply take $\VD = [n]$ and $\ED = \cup_{i \in [n]} \big\{(i,j):\ j \in \X_i \big\}$.
If $\D$ is a symmetric digraph, we can regard $\D$ as a graph, 
and refer to $\D$ as the \emph{side information graph}. 

The side information digraph that describes the ICSI instance in 
Example~\ref{ex:icsi_instance} is depicted in Fig.~\ref{fig:sid}.
Here we choose $\VD = [5]$. 

\begin{figure}[h]
\centering
\scalebox{1} % Change this value to rescale the drawing.
{
\begin{pspicture}(0,-1.82)(3.92,1.82)
\pscircle[linewidth=0.04,dimen=outer](1.97,1.47){0.35}
\usefont{T1}{ptm}{m}{n}
\rput(1.9514062,1.47){$1$}
\pscircle[linewidth=0.04,dimen=outer](3.57,0.33){0.35}
\usefont{T1}{ptm}{m}{n}
\rput(3.5514061,0.33){$2$}
\pscircle[linewidth=0.04,dimen=outer](2.79,-1.47){0.35}
\usefont{T1}{ptm}{m}{n}
\rput(2.7714062,-1.47){$3$}
\pscircle[linewidth=0.04,dimen=outer](1.09,-1.43){0.35}
\usefont{T1}{ptm}{m}{n}
\rput(1.0714062,-1.43){$4$}
\pscircle[linewidth=0.04,dimen=outer](0.35,0.33){0.35}
\usefont{T1}{ptm}{m}{n}
\rput(0.33140624,0.33){$5$}
\psline[linewidth=0.04cm,arrowsize=0.05291667cm 2.0,arrowlength=1.4,arrowinset=0.4]{->}(2.28,1.3)(3.34,0.52)
\psline[linewidth=0.04cm,arrowsize=0.05291667cm 2.0,arrowlength=1.4,arrowinset=0.4]{->}(3.46,0.0)(2.96,-1.18)
\psline[linewidth=0.04cm,arrowsize=0.05291667cm 2.0,arrowlength=1.4,arrowinset=0.4]{->}(2.66,-1.14)(2.02,1.14)
\psline[linewidth=0.04cm,arrowsize=0.05291667cm 2.0,arrowlength=1.4,arrowinset=0.4]{->}(0.68,0.34)(3.26,0.32)
\psline[linewidth=0.04cm,arrowsize=0.05291667cm 2.0,arrowlength=1.4,arrowinset=0.4]{->}(0.36,0.0)(0.86,-1.24)
\psline[linewidth=0.04cm,arrowsize=0.05291667cm 2.0,arrowlength=1.4,arrowinset=0.4]{->}(1.02,-1.08)(0.56,0.08)
\psline[linewidth=0.04cm,arrowsize=0.05291667cm 2.0,arrowlength=1.4,arrowinset=0.4]{->}(2.44,-1.46)(1.44,-1.46)
\end{pspicture} 
}
\caption{The corresponding side information digraph (Fig.~\ref{fig:ex_icsi})}
\label{fig:sid}
\end{figure}
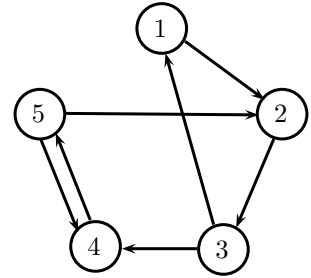 

\vskip 10pt 
\begin{definition}[\cite{Haemers1978}]
\label{def:mr-def}
Let $\D = \big(\VD,\ED \big)$ be a digraph of order $n$, where $\VD
=\{u_1,u_2,\ldots,u_n\}$.  
\begin{enumerate}
	\item A matrix $\bM=(m_{u_i,u_j}) \in \fq^{n \times n}$ (whose rows
	and columns are labeled by the elements of $\VD$) is said to \emph{fit} $\D$ if 
\[
\begin{cases} m_{u_i,u_j} \neq 0,& i = j, \\ m_{u_i,u_j} = 0,& i \neq j, \ (u_i,u_j) \notin \ED. \end{cases}
\]
  \item The \emph{min-rank} of $\D$ over $\fq$ is defined to be
\end{enumerate}	
	\[
	\mrq(\D) \define \min\left\{\rank(\bM): \ \bM \in \fq^{n \times n} \text{ and } \bM \text{ fits } \D \right\}.
	\]
Since a graph can be viewed as a symmetric digraph, 
the above definitions also apply to a graph. 
\end{definition} 
\vskip 10pt 

For instance, the two matrices in Fig.~\ref{fig:matrix_fit_graph} fit 
the digraph $\D$ depicted in Fig.~\ref{fig:sid}. The matrix $\bM_2$ 
has $2$-rank three. By Definition~\ref{def:mr-def}, $\mrt(\D) \leq 3$. 
By Theorem~\ref{thm:sandwiched_theorem_graph} stated below, as
$\aD = 3$, we deduce that $\mrt(\D) \geq 3$. Thus, $\mrt(\D) = 3$ and
$\bM_2$ achieves the min-rank. 

\begin{figure}[h]
\subfloat[A matrix of $2$-rank four]{
$
\bM_1  =
\begin{pmatrix}
1 & 1 & 0 & 0 & 0\\
0 & 1 & 1 & 0 & 0\\
0 & 0 & 1 & 1 & 0\\
0 & 0 & 0 & 1 & 1\\
0 & 1 & 0 & 0 & 1\\
\end{pmatrix}
$
}
\subfloat[A matrix of $2$-rank three]{
$
\bM_2  =
\begin{pmatrix}
1 & 1 & 0 & 0 & 0\\
0 & 1 & 1 & 0 & 0\\
1 & 0 & 1 & 0 & 0\\
0 & 0 & 0 & 1 & 1\\
0 & 0 & 0 & 1 & 1\\
\end{pmatrix}
$
}
\caption{Two matrices that fit $\D$ (Fig.~\ref{fig:sid})}
\label{fig:matrix_fit_graph}
\end{figure} 

Observe that the index code presented in Example~\ref{ex:icsi_instance}
is obtained by taking the dot products of $\bx$ with the first, the second, and 
the forth rows of $\bM_2$. 
These three rows actually span the row space of $\bM_2$. 
This index code has length three, which equals $\rankt(\bM_2)$. 
According to Theorem~\ref{thm:mr_theorem}, this index code
is an optimal scalar linear index code over $\ft$ for the ICSI instance 
described in Example~\ref{ex:icsi_instance}.

\vskip 10pt 
\begin{theorem}[\cite{Yossef, LubetzkyStav}]
\label{thm:mr_theorem}
The length of an optimal scalar linear index code over $\fq$ for the ICSI instance 
described by $\D$ is $\mrq(\D)$. 
\end{theorem}  
\vskip 10pt 

Let $\bt_q(t,\D)$ denote the length of an optimal \emph{vector} 
index code of block length $t$
over $\fq$ for an ICSI instance described by a digraph $\D$. 
Note that we do not require the index codes to be linear.    
Alon {\et} \cite{Alon} defined the \emph{broadcast rate} $\bt_q(\D)$ 
of the corresponding ICSI instance 
to be $\lim_{t \ra \infty}\bt_q(t,\D) / t$
(see also Blasiak {\et}~\cite{BlasiakKleinbergLubetzky2011})
\footnote{In~\cite{Alon} and~\cite{BlasiakKleinbergLubetzky2011}, 
the authors only consider the case $q = 2$, and therefore they use the notations $\bt_t$ and $\bt$,
which is independent of the field size. In our notations, this will correspond to $\bt_2(t,\D)$ and $\bt_2(\D)$. 
At the moment, it is not clear whether the field size $q$ plays any significant role with respect to the value of $\bt_q$. 
For example, in this work, the analysis of min-rank for the cases $q=2$ and $q>2$ is different
(thus, the result in Section~\ref{subsec:mr_two} only applies to $q = 2$). 
Therefore, in the sequel we use the subscript $q$ to ensure the consistence of the notation throughout the work.}
.
In words, the broadcast rate is the average minimum communication cost per symbol in each block $x_i$ 
(for long blocks). 
The reciprocal of $\bt_q(\D)$ is also referred to as the \emph{capacity} (over $\fq$)
of the ICSI instance described by $\D$ (see Langberg and Sprintson~\cite{LangbergSprintson2008}).
Theorem~\ref{thm:sandwiched_theorem_graph} demonstrates 
an intuitive fact that in terms of transmission rates, vector (nonlinear) index codes are at least as good as 
scalar (nonlinear) index codes, which in turn are at least as good as scalar
linear index codes. The last inequality in this theorem is 
called the \emph{clique-covering bound} for min-ranks. 

\vskip 10pt
\begin{theorem}[\cite{Haemers1978, Yossef, Yossef-journal,Alon}]
\label{thm:sandwiched_theorem_graph} 
For any digraph $\D$ we have
\[
\aD \leq \bt_q(\D) \leq \bt_q(1,\D) \leq \mrq(\D) \leq \cc(D).
\]
The same inequalities hold for graphs. 
\end{theorem} 
 
\section{Digraphs of Near-Extreme Min-Ranks}
\label{sec:characterization}

Some of the results presented below are folklore. 
However, we include their proofs for completeness.  

\subsection{(Strongly) Connected Components and Min-Ranks}
\label{subsec:connected_components}

\begin{lemma}[Folklore]
\label{lem:connected_components}
Let $\G = (\VG, \EG)$ be a graph. 
Suppose that $\G_1, \G_2, \ldots, \G_k$ are subgraphs 
of $\G$ that satisfy the following conditions
\begin{enumerate}
\item
The sets $\V(\G_i)$, $i \in [k]$, partition $\V(\G)$;
\item 
There is no edge of the form $\{u,v\}$ where $u \in \V(\G_i)$ and
$v \in \V(\G_j)$ for $i \neq j$. 
\end{enumerate}
Then
\[
\mrq(\G) = \sum_{i = 1}^k \mrq(\G_i).
\]
In particular, the above equality holds if $\G_1, \G_2, \ldots, \G_k$ are all connected components of $\G$.
\end{lemma}
\begin{proof}
The proof follows directly from the fact that a matrix fits $\G$ 
if and only if it is a block diagonal matrix (relabeling the vertices if necessary) 
and the block sub-matrices fit the corresponding subgraphs $\G_i$'s, $i \in [k]$.  
\end{proof} 
\vskip 10pt 

\begin{lemma}[Folklore]
\label{lem:strongly_connected_component}
Let $\D = (\VD, \ED)$ be a digraph. If $\D_1, \D_2, \ldots, \D_k$ are all strongly connected 
components of $\D$, then 
\[
\mrq(\D) = \sum_{i = 1}^k \mrq(\D_i).
\]
\end{lemma} 
\begin{proof} 
Suppose that $\V_i$ is the set of vertices that induces $\D_i$, $i \in [k]$. Then $\{\V_i\}_{i \in [k]}$
forms a partition of $\VD$. 
By relabeling the vertices of $\D$ if necessary, we may assume without loss of generality that for every
$i < j$
\begin{enumerate}
	\item $u < v$ whenever $u \in \V_i$ and $v \in \V_j$;	   
	\item There are no arcs of the form $(v,u)$ where $u \in \V_i$ and $v \in \V_j$. 
\end{enumerate}
If $\bM^{(i)}$ is a minimum-rank matrix that fits $\D_i$ ($i \in [k]$) then the diagonal block matrix $\bM$ whose
diagonal blocks are $\bM^{(i)}$ clearly fits $\D$. Moreover, 
\[
\rank(\bM) = \sum_{i = 1}^k \rank(\bM^{(i)}) = \sum_{i=1}^k \mrq(\D_i). 
\]
Hence $\mrq(\D) \leq \sum_{i = 1}^k \mrq(\D_i)$.
%%%%%%%%%%%%%%
\begin{figure}[htb]
\centering{
\scalebox{1} % Change this value to rescale the drawing.
{
\begin{pspicture}(0,-3.9292188)(9.321875,3.9692187)
\psframe[linewidth=0.04,dimen=outer](8.839531,3.5107813)(1.3995312,-3.9292188)
\psline[linewidth=0.04cm](1.3995312,2.4907813)(8.8,2.5092187)
\psline[linewidth=0.04cm](1.4195312,1.4907813)(8.8195305,1.4707812)
\psline[linewidth=0.04cm](1.3995312,0.49078125)(8.799531,0.47078124)
\psline[linewidth=0.04cm](1.3995312,-2.9292188)(8.8195305,-2.9692187)
\psline[linewidth=0.04cm](1.3995312,-1.9492188)(8.799531,-1.9492188)
\psline[linewidth=0.04cm](2.3595314,3.4707813)(2.3595314,-3.9092188)
\psline[linewidth=0.04cm](3.3795314,3.4907813)(3.3395312,-3.9092188)
\psline[linewidth=0.04cm](4.3595314,3.4907813)(4.3795314,-3.8892188)
\psline[linewidth=0.04cm](7.8395314,3.4707813)(7.8795314,-3.9092188)
\psline[linewidth=0.04cm](6.7595315,3.4907813)(6.7595315,-3.9092188)
\usefont{T1}{ptm}{m}{n}
\rput(1.7923437,1.9607812){$\bO$}
\usefont{T1}{ptm}{m}{n}
\rput(1.8123437,0.9607813){$\bO$}
\usefont{T1}{ptm}{m}{n}
\rput(2.8323438,0.9607813){$\bO$}
\usefont{T1}{ptm}{m}{n}
\rput(1.8123437,-2.4392188){$\bO$}
\usefont{T1}{ptm}{m}{n}
\rput(2.8123438,-2.4392188){$\bO$}
\usefont{T1}{ptm}{m}{n}
\rput(3.8323438,-2.4592187){$\bO$}
\usefont{T1}{ptm}{m}{n}
\rput(1.8123437,-3.4392188){$\bO$}
\usefont{T1}{ptm}{m}{n}
\rput(2.8123438,-3.4392188){$\bO$}
\usefont{T1}{ptm}{m}{n}
\rput(3.8123438,-3.4392188){$\bO$}
\usefont{T1}{ptm}{m}{n}
\rput(7.2323437,-3.4392188){$\bO$}
\psdots[dotsize=0.12](1.8395312,-0.32921875)
\psdots[dotsize=0.12](1.8395312,-0.70921874)
\psdots[dotsize=0.12](1.8395312,-1.1092187)
\psdots[dotsize=0.12](2.8595314,-0.32921875)
\psdots[dotsize=0.12](2.8595314,-0.70921874)
\psdots[dotsize=0.12](2.8595314,-1.1092187)
\psdots[dotsize=0.12](3.8595314,-0.32921875)
\psdots[dotsize=0.12](3.8595314,-0.70921874)
\psdots[dotsize=0.12](3.8595314,-1.1092187)
\psdots[dotsize=0.12](5.159531,-0.30921876)
\psdots[dotsize=0.12](5.579531,-0.7292187)
\psdots[dotsize=0.12](5.9395313,-1.1092187)
\psdots[dotsize=0.12](7.2995315,-0.32921875)
\psdots[dotsize=0.12](7.2995315,-0.70921874)
\psdots[dotsize=0.12](7.2995315,-1.1092187)
\psdots[dotsize=0.12](8.299531,-0.30921876)
\psdots[dotsize=0.12](8.299531,-0.68921876)
\psdots[dotsize=0.12](8.299531,-1.0892187)
\psdots[dotsize=0.12](5.139531,2.9907813)
\psdots[dotsize=0.12](5.559531,2.9907813)
\psdots[dotsize=0.12](5.9595313,2.9907813)
\psdots[dotsize=0.12](5.139531,1.9707812)
\psdots[dotsize=0.12](5.559531,1.9707812)
\psdots[dotsize=0.12](5.9595313,1.9707812)
\psdots[dotsize=0.12](5.179531,-2.4292188)
\psdots[dotsize=0.12](5.599531,-2.4292188)
\psdots[dotsize=0.12](5.9995313,-2.4292188)
\psdots[dotsize=0.12](5.179531,-3.4092188)
\psdots[dotsize=0.12](5.599531,-3.4092188)
\psdots[dotsize=0.12](5.9995313,-3.4092188)
\usefont{T1}{ptm}{m}{n}
\rput(1.8923438,3.7807813){$\V_1$}
\usefont{T1}{ptm}{m}{n}
\rput(2.8323438,3.7807813){$\V_2$}
\usefont{T1}{ptm}{m}{n}
\rput(3.8523438,3.7807813){$\V_3$}
\usefont{T1}{ptm}{m}{n}
\rput(7.262344,3.7807813){$\V_{k-1}$}
\usefont{T1}{ptm}{m}{n}
\rput(8.252344,3.7807813){$\V_k$}
\usefont{T1}{ptm}{m}{n}
\rput(0.6923438,3.0007813){$\V_1$}
\usefont{T1}{ptm}{m}{n}
\rput(0.71234375,1.9807812){$\V_2$}
\usefont{T1}{ptm}{m}{n}
\rput(0.73234373,0.98078126){$\V_3$}
\usefont{T1}{ptm}{m}{n}
\rput(0.82234377,-2.4192188){$\V_{k-1}$}
\usefont{T1}{ptm}{m}{n}
\rput(0.6923438,-3.4192188){$\V_k$}
\usefont{T1}{ptm}{m}{n}
\rput(1.8714062,3.0192187){$\bM^{(1)}$}
\usefont{T1}{ptm}{m}{n}
\rput(2.8914063,1.9592187){$\bM^{(2)}$}
\usefont{T1}{ptm}{m}{n}
\rput(3.8514063,0.9792188){$\bM^{(3)}$}
\usefont{T1}{ptm}{m}{n}
\rput(7.3214064,-2.440781){\small $\bM^{(k-1)}$}
\usefont{T1}{ptm}{m}{n}
\rput(8.331407,-3.440781){$\bM^{(k)}$}
\end{pspicture} 
}
}
\caption{Matrix $\bM$ that fits $\D$}
\label{fig:matrix_fittingG}
\end{figure}
%%%%%%%%5
It remains to show that $\mrq(\D) \geq \sum_{i = 1}^k \mrq(\D_i)$. 
Suppose that the matrix $\bM$ fits $\D$. 
By the assumptions on $\V_i$'s ($i \in [k]$) stated at the beginning of the proof,
$\bM$ must be an upper-triangular block matrix, as 
shown in Fig.~\ref{fig:matrix_fittingG}. If we let $\bM^{(i)}$ be the sub-matrix of $\bM$
formed by the rows and columns indexed by the elements of $\V_i$, then $\bM^{(i)}$ fits $\D_i$ and hence,
\[
\rank(\bM) \geq \sum_{i = 1}^k \rank(\bM^{(i)}) \geq \sum_{i=1}^k \mrq(\D_i).
\]
Thus, $\mrq(\D) \geq \sum_{i=1}^k \mrq(\D_i)$. 
\end{proof} 
\vskip 10pt 

These two lemmas suggest that it is sufficient to study the min-ranks of 
connected graphs and strongly connected digraphs, respectively.

\subsection{Digraphs of Min-Rank One}

\begin{proposition}[Folklore]
\label{lem:complete_graph_pp3}
Let $\D = (\VD, \ED)$ be a digraph. Then $\mrq(\D) = 1$ if and only if 
$\D$ is a complete digraph. The same statement holds for a graph.  
\end{proposition} 
\begin{proof}
Suppose $\D$ is a digraph of order $n$. 
If $\mrq(\D) = 1$, by the definition of min-rank there exists
an $n \times n$ matrix $\bM = (m_{u,v})$ of $q$-rank one that fits $\D$. 
Then the rows of $\bM$ must be scalar multiples of each other. 
Moreover, $m_{u,u} \neq 0$ for all $u \in \VD$. 
Hence $m_{u,v} \neq 0$ for all $u \in \VD$ and all $v \in \VD$. 
Therefore, $(u,v) \in \ED$ for all $u \neq v$, $u \in \VD$ and $v \in \VD$. 
In other words, $\D$ is a complete digraph. 

Conversely, suppose that $\D$ is a complete digraph. Then $\bJ$, the $n \times n$ 
all-one matrix, fits $\D$ and $\mrq(\bJ) = 1$, which implies that $\mrq(\D) = 1$. 
The same arguments hold for graphs.  
\end{proof} 
\vskip 10pt 

\begin{corollary} 
Let $\D = (\VD, \ED)$ be a digraph. Then $\bt_q(\D) = 1$ if and only if 
$\D$ is a complete digraph. The same statement holds for a graph.  
\end{corollary} 
\begin{proof} 
Suppose $\bt_q(\D) = 1$. Then by Theorem~\ref{thm:sandwiched_theorem_graph}, 
$\aD = 1$. Therefore, $\D$ is a complete digraph. 
Conversely, if $\D$ is a complete digraph then by Proposition~\ref{lem:complete_graph_pp3}, 
$\mrq(\D) = 1$. Again by Theorem~\ref{thm:sandwiched_theorem_graph}, 
$\bt_q(\D) = 1$. 
\end{proof} 

\subsection{Digraphs of Min-Rank Two}
\label{subsec:mr_two}

In this section, only the \emph{binary} alphabet is considered.
We first introduce the following concept
of a \emph{fair coloring} of a digraph. Recall that a $k$-coloring 
of a graph $\G = (\VG, \EG)$ is a mapping $\phi: \VG \ra [k]$ which satisfies 
the condition that $\phi(u) \neq \phi(v)$ whenever $\{u,v\} \in \EG$. 
We often refer to $\phi(u)$ as the \emph{color} of $u$. 
If there exists a $k$-coloring of $\G$, then we say that $\G$ is $k$-colorable.

\vskip 5pt 
\begin{definition}
\label{def:fair_coloring}
Let $\D = (\VD, \ED)$ be a digraph. A \emph{fair $k$-coloring} of $\D$ is a mapping $\phi: \VD
\ra [k]$ that satisfies the following conditions:
\begin{enumerate}
	\item[(C1)] If $(u,v) \in \ED$ then $\phi(u) \neq \phi(v)$;
	\item[(C2)] For each vertex $u$ of $\D$, it holds that $\phi(v) = \phi(\omega)$ for all out-neighbors $v$ and $\omega$ of $u$.  
\end{enumerate}  
If there exists a fair $k$-coloring of $\D$, we say that we can \emph{color} $\D$ \emph{fairly by $k$ colors}, 
or, $\D$ is \emph{fairly $k$-colorable}.  
\end{definition} 
\vskip 3pt 

We refer to the condition (C2) as the \emph{fairness} of the coloring, 
since this condition guarantees that all out-neighbors of each vertex share the same color. 

\vskip 5pt 
\begin{lemma}
\label{lem:fair_coloring}
A digraph $\D = (\VD, \ED)$ is fairly $3$-colorable if and only if there exists
a partition of $\VD$ into three subsets $A$, $B$, and $C$ that satisfy the following 
conditions
\begin{enumerate}
	\item For every $u \in A$: either $N_O(u) \subseteq B$ or $N_O(u) \subseteq C$;
	\item For every $u \in B$: either $N_O(u) \subseteq A$ or $N_O(u) \subseteq C$;
	\item For every $u \in C$: either $N_O(u) \subseteq A$ or $N_O(u) \subseteq B$.
\end{enumerate} 
\end{lemma}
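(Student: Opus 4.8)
The plan is to establish the equivalence by directly translating between the coloring $\phi$ and the tripartition $(A,B,C)$, since a fair $3$-coloring is literally an assignment of each vertex to one of three classes. The key observation I would emphasize at the outset is that, for a vertex $i$, a single partition constraint such as ``$N_O(i) \subseteq B$ or $N_O(i) \subseteq C$'' simultaneously encodes \emph{both} defining conditions of a fair coloring: it forbids any out-neighbor of $i$ from lying in the same class as $i$ (this is condition (C1)), and it forces all out-neighbors of $i$ into one common class (this is condition (C2)). Recognizing this packaging is what makes both directions routine.

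For the forward direction, suppose $\D$ is fairly $3$-colorable via $\phi : \VD \to \{1,2,3\}$. I would set $A = \phi^{-1}(1)$, $B = \phi^{-1}(2)$, $C = \phi^{-1}(3)$, which clearly partition $\VD$. Fix $i \in A$. By (C1), every out-neighbor $j$ of $i$ satisfies $\phi(j) \neq 1$, so $j \in B \cup C$; by (C2), all such $j$ share a single color, so either they all lie in $B$ or they all lie in $C$ (the empty case $N_O(i) = \emptyset$ trivially satisfies both inclusions). This is exactly condition~1, and the cases $i \in B$, $i \in C$ follow by the same argument, giving conditions~2 and~3.

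For the reverse direction, given a partition $(A,B,C)$ satisfying conditions~1--3, I would define $\phi$ to send $A, B, C$ to $1, 2, 3$ respectively and verify (C1) and (C2). Both follow at once from the packaging observation: if $i \in A$, then condition~1 places $N_O(i)$ entirely inside one of $B, C$, so every out-neighbor receives a color different from $\phi(i) = 1$ (yielding (C1)) and all out-neighbors receive the same color (yielding (C2)); the cases $i \in B$ and $i \in C$ are symmetric.

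Since the lemma is a definitional reformulation, I do not anticipate a genuine obstacle; the only points requiring care are handling the degenerate case of a vertex with no out-neighbors and confirming that the three symmetric cases in each direction are all covered, rather than any substantive combinatorial difficulty.
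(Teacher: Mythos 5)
Your proof is correct and follows essentially the same route as the paper: both directions proceed by identifying the color classes $\phi^{-1}(1),\phi^{-1}(2),\phi^{-1}(3)$ with the partition $(A,B,C)$ and observing that conditions (C1) and (C2) together are equivalent to the stated inclusion conditions. Your write-up is merely more explicit than the paper's (which dismisses the forward direction as ``obviously satisfied''), including the degenerate case $N_O(i)=\emptyset$, but there is no substantive difference in approach.
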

\begin{proof}
If $\D$ is fairly $3$-colorable, let $A$, $B$, and $C$ respectively be the sets of vertices of $\D$
that share the same color. 
Then clearly $A$, $B$, and $C$ partition $\VD$. 
Moreover, since all out-neighbors of each vertex must have the same color, 
the three conditions above are obviously satisfied. Conversely, if those conditions are satisfied, then 
$\phi: \ \VD \ra [3]$, defined by 
\[
\phi(u) = 
\begin{cases} 
1, & u \in A\\
2, & u \in B\\
3, & u \in C
\end{cases}, 
\]
is a fair $3$-coloring of $\D$.  
\end{proof} 
\vskip 10pt 

\begin{theorem}
\label{thm:mr_color}
Let $\D = (\VD, \ED)$ be a digraph. Then $\mrt(\D) \leq 2$ if and only if $\Dc$, 
the complement of $\D$, is fairly $3$-colorable. 
\end{theorem}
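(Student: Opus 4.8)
The plan is to exploit the rigid structure of binary matrices of rank at most two. First I would set up a dictionary. Over $\ft$, the row space of a rank-$\le 2$ matrix lies inside some two-dimensional subspace $W$, and $W$ contains exactly three nonzero vectors $\bv_1,\bv_2,\bv_3$, which satisfy $\bv_1+\bv_2+\bv_3=\bO$. Since any matrix $\bM$ fitting $\D$ has $m_{i,i}=1$, none of its rows is zero, so each row $\bM_i$ equals one of $\bv_1,\bv_2,\bv_3$; write $c(i)\in[3]$ for its index. Reading $\bv_1+\bv_2+\bv_3=\bO$ coordinatewise, for every column $j$ the triple $((\bv_1)_j,(\bv_2)_j,(\bv_3)_j)$ sums to $0$, and it cannot be all-zero because $m_{j,j}=(\bv_{c(j)})_j=1$. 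Hence it is a permutation of $(1,1,0)$, and I define $z(j)\in[3]$ to be the unique index with $(\bv_{z(j)})_j=0$. The crucial translation is then $(\bv_k)_j=0\iff k=z(j)$, together with $c(j)\ne z(j)$ (forced by the nonzero diagonal).

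For the forward direction I would show that $z$ is a fair $3$-coloring of $\Dc$. The fitting condition says $m_{i,j}=(\bv_{c(i)})_j=0$ whenever $(i,j)$ is an arc of $\Dc$, i.e.\ $(i,j)\notin\ED$ with $i\ne j$; by the dictionary this reads $c(i)=z(j)$. Thus every out-neighbour $j$ of $i$ in $\Dc$ satisfies $z(j)=c(i)$, so all out-neighbours of $i$ receive the same $z$-value, which is condition (C2). For (C1), if $(i,j)$ is an arc of $\Dc$ then $z(j)=c(i)\ne z(i)$, whence $z(i)\ne z(j)$. (The degenerate case in which the fitting matrix has rank one, where $\D$ is complete and $\Dc$ is edgeless, is absorbed by choosing any two-dimensional $W$ containing the row space.)

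For the reverse direction I would reverse-engineer the construction suggested by the dictionary. Given a fair $3$-coloring $\psi$ of $\Dc$, I define three vectors by $(\bv_k)_j=1$ if $\psi(j)\ne k$ and $(\bv_k)_j=0$ if $\psi(j)=k$. Each coordinate triple then sums to $2\equiv 0$, so $\bv_1+\bv_2+\bv_3=\bO$ and the matrix I build will have rank at most two. I assign to row $i$ the vector $\bv_{c(i)}$, where $c(i)$ is the common color of the out-neighbours of $i$ in $\Dc$ (any color $\ne\psi(i)$ if $i$ has no out-neighbour); this common color exists by (C2). Condition (C1) forces $c(i)\ne\psi(i)$, giving $m_{i,i}=(\bv_{c(i)})_i=1$; and for every arc $(i,j)$ of $\Dc$ we have $\psi(j)=c(i)$, hence $m_{i,j}=(\bv_{c(i)})_j=0$, so $\bM$ fits $\D$. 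Therefore $\mrt(\D)\le 2$.

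The main obstacle I anticipate is purely bookkeeping: keeping straight the two $[3]$-valued labels attached to a vertex — the index $c(i)$ of the vector its \emph{row} equals, versus the index $z(i)$ of the vector that vanishes in its \emph{column} — and checking that the column-wise vanishing pattern demanded by fitting lines up with the out-neighbour condition (C2) rather than with an in-neighbour condition. I would use Lemma~\ref{lem:fair_coloring} to restate the fair coloring as the partition $A,B,C$ should a more set-theoretic presentation of (C2) read more cleanly. Once the equivalence $(\bv_k)_j=0\iff k=z(j)$ is recorded, all remaining verifications are routine.
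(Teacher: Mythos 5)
Your proof is correct and follows essentially the same route as the paper's: your column-label $z(\cdot)$ is exactly the paper's partition $A,B,C$ (the level sets recording which of the three nonzero vectors of the two-dimensional space vanishes at each coordinate), and the matrix you build in the reverse direction is the same matrix the paper constructs. The only difference is presentational --- you treat the three nonzero vectors of $W$ symmetrically via the dictionary $c(i)$, $z(j)$, whereas the paper privileges two spanning rows and routes the coloring condition through the partition restatement in Lemma~\ref{lem:fair_coloring}.
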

\begin{proof} 
\mbox{}\\
\nin {\bf The ONLY IF direction:}\\
By the definition of min-rank, $\mrt(\D) \leq 2$ implies the existence of an $n \times n$ binary matrix $\bM$
of $2$-rank at most two that fits $\D$. There must be some two rows of $\bM$ that 
span its entire row space. Without loss of generality, suppose that they are the first two rows of $\bM$, 
namely, $\bM_1$ and $\bM_2$ (these two rows might be 
linearly dependent if $\mrt(\D) < 2$). Let $A$, $B$, and $C$ be disjoint subsets of $\VD$ such that
\[
\supp(\bM_1) = A \cup B,\ \supp(\bM_2) = B \cup C.
\]
Hence, 
\[
\supp(\bM_1) \cap \supp(\bM_2) = B. 
\]
Since the binary alphabet is considered and the matrix $\bM$ has no zero rows, 
for every $u \in \VD$, one of the following must hold: 
(1) $\bM_u = \bM_1$; (2) $\bM_u = \bM_2$; 
(3) $\bM_u = \bM_1 + \bM_2$. 
Hence for every $u \in \VD$
\[
u \in \supp(\bM_u) \subseteq A \cup B \cup C. 
\]
This implies that $A \cup B \cup C = \VD$. 

Suppose that $u \in A$. Then either $\bM_u = \bM_1$ or $\bM_u = \bM_1 + \bM_2$. 
The former condition holds 
if and only if $\supp(\bM_u) = A \cup B$, which in turns implies that $(u,v) \in \ED$ for all 
$v \in A\cup B \setminus \{u\}$. 
In other words, $(u,v) \notin \EDc$ for all $v \in A\cup B$. 
Here $\Dc = (\V(\Dc), \E(\Dc))$ is the complement of $\D$. 
The latter condition holds if and only if  
$\supp(\bM_u) = A \cup C$, which implies that 
$(u,v) \notin \EDc$ for all $v \in A\cup C$. In summary, for every $u \in A$ we have
%%%%%%%%%%%%%%%%%%%
\begin{enumerate}
	\item $(u,v) \notin \EDc$, for all $v \in A$;
  \item Either $(u,v) \notin \EDc$ for all $v \in B$, or $(u,v) \notin \EDc$ for all $v \in C$; 
\end{enumerate}
In other words, for every $u \in A$, either $\N^{\Dc}_O(u) \subseteq B$ 
or $\N^{\Dc}_O(u) \subseteq C$. 
%%%%%%%%%%%%%%%%%%
Analogous conditions hold for every $u \in B$ and for every $u \in C$ as well. Therefore, 
by Lemma~\ref{lem:fair_coloring}, $\Dc$ is fairly $3$-colorable.  \\
\nin {\bf The IF direction:}\\
Suppose now that $\Dc$ is fairly $3$-colorable. 
It suffices to find an $n \times n$ binary matrix $\bM$
of rank at most two that fits $\D$. 
By Lemma~\ref{lem:fair_coloring}, there exists a partition of $\VDc$ 
into three subsets $A$, $B$, and $C$ that satisfy the following three conditions
%%%%%%%%%%%%%%%%%%%
\begin{enumerate}
	\item For every $u \in A$: either $N^{\Dc}_O(u) \subseteq B$ or $N^{\Dc}_O(u) \subseteq C$;
	\item For every $u \in B$: either $N^{\Dc}_O(u) \subseteq A$ or $N^{\Dc}_O(u) \subseteq C$;
	\item For every $u \in C$: either $N^{\Dc}_O(u) \subseteq A$ or $N^{\Dc}_O(u) \subseteq B$.
\end{enumerate} 
%%%%%%%%%%%%%%%%%
We construct an $n \times n$ matrix $\bM = (m_{u,v})$ as follows. 
For each $u \in A$, if $N^{\Dc}_O(u) \subseteq B$ then let
\[
m_{u,v} =  
\begin{cases}
1, & v \in A \cup C\\
0, & v \in B
\end{cases}.
\]
Otherwise, if $N^{\Dc}_O(u) \subseteq C$ then let
\[
m_{u,v} = 
\begin{cases}
1, & v \in A \cup B\\
0, & v \in C
\end{cases}.
\]
For $u \in B$ and $u \in C$, $\bM_u$ can be constructed analogously. 
It is obvious that $\bM$ fits $\D$. Moreover, each row of $\bM$ can
always be written as a linear combination of the two binary vectors whose 
supports are $A \cup B$ and $B \cup C$, respectively. Therefore, $\rankt(\bM)
\leq 2$. The proof is complete.  
\end{proof}
\vskip 10pt 

The following corollary characterizes the digraphs of min-rank two over $\ft$. 

\vskip 10pt 
\begin{corollary}
\label{coro:mr_color}
A digraph $\D$ has min-rank two over $\ft$ if and only if $\Dc$ is fairly $3$-colorable and $\D$ is not
a complete digraph.   
\end{corollary}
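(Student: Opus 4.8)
The plan is to deduce this corollary directly from the two results already established: Theorem~\ref{thm:mr_color}, which characterizes $\mrt(\D) \leq 2$ via fair $3$-colorability of $\Dc$, and Proposition~\ref{lem:complete_graph_pp3}, which identifies the digraphs of min-rank one as exactly the complete digraphs. The key observation is that the statement ``$\D$ has min-rank two'' means $\mrt(\D) = 2$ exactly, so I first rewrite this as the conjunction of an upper bound $\mrt(\D) \leq 2$ and a lower bound $\mrt(\D) \geq 2$, and then translate each half using one of the two available results.

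First I would dispose of the lower bound. For any digraph of order at least one, every matrix that fits $\D$ has nonzero diagonal entries, so it is a nonzero matrix and hence has rank at least one; consequently $\mrt(\D) \geq 1$ always holds. This means that the condition $\mrt(\D) \geq 2$ is equivalent to $\mrt(\D) \neq 1$. By Proposition~\ref{lem:complete_graph_pp3}, $\mrt(\D) = 1$ if and only if $\D$ is a complete digraph, so $\mrt(\D) \neq 1$ is equivalent to $\D$ not being complete.

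I would then assemble the two directions. For the forward direction, suppose $\mrt(\D) = 2$. Then in particular $\mrt(\D) \leq 2$, so Theorem~\ref{thm:mr_color} yields that $\Dc$ is fairly $3$-colorable; and $\mrt(\D) \neq 1$ forces $\D$ to be non-complete by the observation above. For the converse, suppose $\Dc$ is fairly $3$-colorable and $\D$ is not complete. Theorem~\ref{thm:mr_color} gives $\mrt(\D) \leq 2$, and non-completeness together with Proposition~\ref{lem:complete_graph_pp3} gives $\mrt(\D) \neq 1$; combined with the universal bound $\mrt(\D) \geq 1$, this pins down $\mrt(\D) = 2$ exactly.

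There is no substantial obstacle here, since the corollary is a bookkeeping combination of two proved statements. The only point requiring a moment of care is the lower bound $\mrt(\D) \geq 1$, which rules out the degenerate possibility of min-rank zero and legitimizes replacing ``$\mrt(\D) \geq 2$'' by ``$\mrt(\D) \neq 1$''; this is immediate from the nonzero-diagonal requirement in the definition of a fitting matrix.
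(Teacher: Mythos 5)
Your proof is correct and is exactly the intended derivation: the paper states Corollary~\ref{coro:mr_color} without proof as an immediate consequence of Theorem~\ref{thm:mr_color} and Proposition~\ref{lem:complete_graph_pp3}, which is precisely the bookkeeping you carry out. Your explicit note that $\mrt(\D) \geq 1$ (from the nonzero-diagonal requirement on fitting matrices) is the right detail to make the splitting of ``$\mrt(\D)=2$'' into the upper and lower bounds airtight.
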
 
\vskip 10pt 

For a graph $\G$, it was proved by Blasiak {\et}~\cite{BlasiakKleinbergLubetzky2011}
that $\bt_2(\G) = 2$ if and only if $\Gc$ is bipartite and $\G$ is not a complete graph.
A characterization of digraphs $\D$ with $\bt_2(\D) = 2$ was also obtained therein.  
More specifically, it was shown that $\bt_2(\D) = 2$ if and only if 
$\Dc$ does not contain a subgraph isomorphic to an \emph{almost alternating cycle}. 
The almost alternating $(2m+1)$-cycle ($m \geq 1$) is defined as follows. Its vertex set
consists of all integers between $-m$ and $m$, inclusive, and there is an edge from $i$
to $j$ if and only if $j - i \in \{m,m+1\}$. Based on this characterization, a polynomial
time algorithm to recognize a digraph $\D$ with $\bt_2(\D) = 2$ was also derived in~\cite{BlasiakKleinbergLubetzky2011}. 
Hence, the question whether an optimal \emph{vector nonlinear} index code 
of length \emph{two} 
exists for an ICSI instance described by a digraph can be answered in polynomial time. 
For \emph{scalar linear} index code, the same question turns out to be hard. 
We prove later in Section~\ref{sec:MR_hardness} that the decision problem 
whether $\mrt(\D) = 2$ is NP-complete.  

\subsection{Digraphs of Min-Ranks Equal to Their Orders}
\label{subsec:mr=n}

To tackle \emph{graphs} of min-ranks almost equal to their orders (Section~\ref{subsec:mr=n}, 
\ref{subsec:mr=n-1}, \ref{subsec:mr=n-2}), 
we employ the concept of \emph{maximum matching} from graph theory.   

\begin{definition}
\label{def:matching}
A \emph{matching} in a graph is a set of edges without common vertices. 
A \emph{maximum matching} is a matching that contains the largest
possible number of edges. The number of edges in a maximum matching in $\G$ is denoted by $\mG$. 
\end{definition}
\vskip 10pt

The following upper bound
on min-rank, so-called the \emph{maximum-matching bound}, is a 
weakened version of the clique-covering bound (see Theorem~\ref{thm:sandwiched_theorem_graph}).

\vskip 10pt 
\begin{proposition}[Maximum-matching bound]
\label{prop:mmb}
For any graph $\G$ of order $n$, it holds that $\mrq(\G) \leq n - \mG$.  
\end{proposition}
\begin{proof}
As the set of vertices of $\G$ can be covered by $\mG$ cliques of size two (the edges
in a maximum matching) and $n - 2\mG$ cliques of size one (the 
remaining vertices that are not covered by the edges in the matching), 
by Theorem~\ref{thm:sandwiched_theorem_graph}, the proof follows. 
\end{proof} 
\vskip 10pt

Graphs $\G$ that satisfy $\aG = n - \mG$ are called
Koenig-Egervary graphs~\cite{Deming1979}. 
It was proved therein that there is a polynomial time algorithm to recognize
a Koenig-Egervary graph $\G$ and subsequently find $\mG$. 
By Theorem~\ref{thm:sandwiched_theorem_graph} and Proposition~\ref{prop:mmb}, 
if $\G$ is a Koenig-Egervary graph then $\mrq(\G) = n - \mG$.
Moreover, $\mrq(\G)$ can be found in polynomial time. 
The graphs that satisfy the conditions stated in 
Proposition~\ref{pro:undirected_rank_equal_order},
Proposition~\ref{thm:star_graph}, and Theorem~\ref{thm:mr_n-2}
are all Koenig-Egervary graphs (see their proofs). 

\begin{proposition}[Folklore]
\label{pro:undirected_rank_equal_order}
Let $\G$ be a graph of order $n$. Then $\mrq(\G) = n$ if and only if $\mG = 0$ 
(or equivalently, $\G$ has no edges).  
\end{proposition}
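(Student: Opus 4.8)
The plan is to prove both directions directly, with the forward implication essentially immediate from the definition of min-rank and the reverse implication handled by contraposition through an explicit rank-deficient construction.

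First I would dispose of the easy direction. Suppose $\G$ has no edges. Then for \emph{any} matrix $\bM$ fitting $\G$, the two defining conditions leave no freedom off the diagonal: since there is no edge $\{i,j\}$ to permit a nonzero entry, we are forced to have $m_{i,j} = 0$ for every $i \neq j$, while $m_{i,i} \neq 0$ for all $i$. Hence every fitting matrix is diagonal with nonzero diagonal entries and therefore has rank exactly $n$. Minimizing over all fitting matrices yields $\mrq(\G) = n$.

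For the converse I would argue the contrapositive: if $\G$ has at least one edge, then $\mrq(\G) \le n-1 < n$. Fix an edge $\{u,v\} \in \EG$ and construct a single fitting matrix of rank at most $n-1$. Define $\bM = (m_{i,j})$ by $m_{i,i} = 1$ for all $i$, $m_{u,v} = m_{v,u} = 1$, and $m_{i,j} = 0$ for all remaining $i \neq j$. This $\bM$ fits $\G$, since its diagonal is all-ones and its only nonzero off-diagonal entries lie at positions $(u,v)$ and $(v,u)$, both permitted because $\{u,v\}$ is an edge. The crucial observation is that the $u$th row equals $\be_u + \be_v$ and the $v$th row equals $\be_v + \be_u$, so $\bM_u = \bM_v$. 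The matrix then has two identical rows and hence rank at most $n-1$, giving $\mrq(\G) \le n-1$ and completing the contrapositive.

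I do not expect a genuine technical obstacle here; the only point worth flagging is \emph{why} the construction works, namely that $\G$ is undirected. A single edge $\{u,v\}$ simultaneously licenses nonzero entries in both positions $(u,v)$ and $(v,u)$, and it is exactly this symmetry of the fitting pattern that allows the two rows to coincide. For a general digraph the analogous trick fails (a lone arc $(u,v)$ does not force such a rank drop), which is consistent with the proposition being stated for graphs rather than digraphs.
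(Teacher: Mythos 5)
Your proof is correct. It is worth noting that the paper gives no standalone proof of this proposition: it is stated without proof, being an immediate special case of the digraph result (Proposition~\ref{thm:mr_n}), since a graph viewed as a digraph is acyclic exactly when it has no edges (any edge yields a $2$-circuit). The comparison is still instructive. Your contrapositive construction --- two identical rows equal to $\be_u + \be_v$, licensed by a single edge $\{u,v\}$ --- is precisely the $r=2$ case of the paper's circuit construction in the proof of Proposition~\ref{thm:mr_n} (up to sign), so that half is essentially the same idea. The genuine divergence is in the other direction: the paper establishes ``$\mrq(\D) = n$ for acyclic $\D$'' by invoking the bound $\aD \le \mrq(\D)$ from Theorem~\ref{thm:sandwiched_theorem_graph}, an external and nontrivial result, whereas you prove ``$\mrq(\G) = n$ for edgeless $\G$'' from first principles: with no edges, every fitting matrix is forced to be diagonal with nonzero diagonal entries, hence of rank exactly $n$. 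Your argument is therefore completely self-contained for the graph case, at the cost of not generalizing to digraphs --- an acyclic digraph may have many arcs, so its fitting matrices need not be diagonal, and there the sandwich bound (or an induction along a topological ordering) is genuinely needed. Your closing remark about why a lone arc $(u,v)$ does not force a rank drop is exactly the right observation and correctly identifies where the graph and digraph cases part ways.
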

\begin{proof} 
If $\G$ has no edges, a matrix fits $\G$ if and only if it is a diagonal matrix, whose 
entries on the main diagonal are all nonzero. The $q$-rank of such a matrix is $n$. 
Therefore, $\mrq(\G) = n$. 

Suppose for contradiction that $\mrq(\G) = n$ and $\G$ contains some edge. 
Then $\mG \geq 1$ and we have $\mrq(\G) \leq n - 1$, according to the maximum-matching
bound. We obtain a contradiction.   
\end{proof}
\vskip 10pt  

\begin{proposition}
\label{thm:mr_n}
Let $\D$ be a digraph of order $n$. Then $\mrq(\D) = n$ if and only if $\D$ is acyclic.  
\end{proposition}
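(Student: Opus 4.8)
The plan is to prove both implications, using throughout that $\mrq(\D)\le n$ always holds: the identity matrix $\bI_n$ fits $\D$ (its diagonal entries are $1\neq 0$ and its off-diagonal entries vanish), and it has rank $n$. So the whole question is whether a fitting matrix of rank strictly less than $n$ can exist.

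For the \emph{if} direction, suppose $\D$ is acyclic. The quickest route is Theorem~\ref{thm:sandwiched_theorem_graph}: an acyclic digraph is its own maximum acyclic induced subgraph, so $\aD=n$, whence $n=\aD\le\mrq(\D)\le n$ and equality follows. A self-contained alternative I would also keep in mind is to fix a topological ordering of the vertices and relabel so that $(i,j)\in\ED$ forces $i<j$; then every fitting matrix $\bM$ has the nonzero entries of row $i$ confined to column $i$ and to columns indexed by out-neighbors $j>i$, so $\bM$ is upper triangular with nonzero diagonal and hence has full rank $n$. Either way $\mrq(\D)=n$.

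For the \emph{only if} direction I would argue by contraposition: assuming $\D$ is not acyclic, I exhibit a fitting matrix of rank at most $n-1$. Since $\D$ contains a circuit $\C=(i_1,i_2,\dots,i_r)$ with $(i_s,i_{s+1})\in\ED$ for all $s$ (indices modulo $r$, so $i_{r+1}=i_1$), I set
\[
\bM_{i_s}=\be_{i_s}-\be_{i_{s+1}}\ \ (s=1,\dots,r),\qquad \bM_k=\be_k \text{ for } k\notin\{i_1,\dots,i_r\}.
\]
This $\bM$ fits $\D$: the diagonal entry of row $i_s$ is $1\neq0$ because the circuit vertices are pairwise distinct, and the only off-diagonal nonzero of row $i_s$ sits in column $i_{s+1}\in N_O(i_s)$, which is admissible. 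The key point is that summing the circuit rows telescopes, $\sum_{s=1}^r\bM_{i_s}=\sum_{s=1}^r(\be_{i_s}-\be_{i_{s+1}})=\bO$, a nontrivial dependence since $r\ge2$. Hence $\rank(\bM)\le n-1$ and $\mrq(\D)\le n-1<n$.

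The argument is short, so there is no deep obstacle; the points requiring care are confined to the verification of the only-if construction — that the chosen rows genuinely fit $\D$ (distinctness of the circuit vertices forces nonzero diagonals, and $(i_s,i_{s+1})\in\ED$ places each stray entry on an admissible out-neighbor), and that the telescoping relation is a bona fide linear dependence. Everything is field-independent: $1\neq0$ in any $\fq$ makes the diagonals valid and the telescoping sum vanish, and over $\ft$ the same rows read $\be_{i_s}+\be_{i_{s+1}}$, with each unit vector appearing exactly twice, so the result holds over every $\fq$.
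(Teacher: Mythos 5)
Your proof is correct and follows essentially the same route as the paper: the circuit direction uses the identical telescoping construction (rows $\be_{i_s}-\be_{i_{s+1}}$ along the circuit, unit vectors elsewhere), differing only cosmetically in that you phrase the dependence as $\sum_{s=1}^r \bM_{i_s}=\bO$ with indices mod $r$ while the paper writes the last row as $\be_{i_1}-\be_{i_r}=\sum_{s=1}^{r-1}\bM_{i_s}$, and the acyclic direction invokes Theorem~\ref{thm:sandwiched_theorem_graph} exactly as the paper does. Your aside giving a self-contained triangularity argument via a topological ordering is a nice bonus, but the core argument matches the paper's.
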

\begin{proof}
Equivalently, we show that $\mrq(\D) \leq n - 1$ if and only if $\D$ has a circuit. 

Suppose that $\D$ has a circuit. Then by the circuit-packing bound established in 
Section~\ref{subsec:circuit_packing_bound}, we deduce that $\mrq(\D) \leq n - 1$. 

Conversely, suppose that $\mrq(\D) \leq n - 1$. 
Then there exists a matrix $\bM$ fitting $\D$ whose rows are linearly dependent. 
In other words, $\sum_{i \in I} \al_i\bM_i = \bO$ for some nonempty subset $I \subseteq \VD$
and for some $\al_i \in \fq^*$, $i \in I$. 
Let $\D'$ be the subgraph of $\D$ induced by the vertices in $I$ and $\bM'$
the sub-matrix of $\bM$ restricted to the rows and columns indexed by the elements of $I$. 
Obviously $\bM'$ fits $\D'$. We show that there exists a circuit in $\D'$. Since
$\sum_{i \in I} \al_i\bM'_i = \bO$,
each column of $\bM'$ has at least two nonzero entries. Therefore, 
for each vertex $v$ of $\D'$, there exists another vertex $u$ of $\D'$ such that
$(u,v)$ is an arc in $\D'$. Starting from an arbitrary vertex $v_1$ of $\D'$ and 
applying this property recursively, we obtain a sequence of vertices in $\D'$
\[
v_1, v_2, \ldots, v_s, v_{s+1}, \ldots,
\]  
where $(v_{s+1}, v_s)$ is an arc in $\D'$ for every $s \geq 1$. Since $\D'$ has a finite
number of vertices, there must be a point 
when a vertex appears twice in the above sequence for the first time. 
This vertex, together with the other vertices lying between its two occurrences, 
form a circuit inside $\D'$, which is also a circuit inside $\D$. 
\end{proof} 
\vskip 10pt 
 
The existence of a circuit in a digraph can be detected by using a depth-first search, the 
time complexity of which is linear in the size of the digraph. Hence, 
as a consequence of Proposition~\ref{thm:mr_n}, the decision problem whether a digraph has 
min-rank equal to its order can be solved in polynomial time. 

\vskip 10pt 
\begin{remark}
The second direction in the proof of Proposition~\ref{thm:mr_n} has a shorter proof as follows.
Suppose that $\mrq(\D) \leq n - 1$ but $\D$ is acyclic. 
Then $\mrq(\D) \geq \aD = n$, by Theorem~\ref{thm:sandwiched_theorem_graph}.
That is a contradiction. 
However, the original proof of Proposition~\ref{thm:mr_n}
provides us with a simple and direct proof of the inequality
$\aD \leq \mrq(\D)$ (see Corollary~\ref{coro:aD}).
This inequality for digraphs was proved indirectly via the use of 
$\bt_q(1,\D)$ by Bar-Yossef {\et}~\cite{Yossef-journal}. 
In such an indirect proof, either arguments from Information Theory
are invoked~\cite[Theorem 7]{Yossef-journal} or the corresponding
confusion graph is considered~\cite[Lemma 37]{Yossef-journal}. 
\end{remark}

\vskip 10pt 
\begin{corollary}
\label{coro:aD}
For a digraph $\D$ we have
\[
\aD \leq \mrq(\D).  
\] 
\end{corollary}  
\begin{proof}
First note that if $\D'$ is an induced subgraph of $\D$ then $\mrq(\D') \leq \mrq(\D)$. 
Indeed, suppose $\bM$ is a matrix that fits $\D$ and has rank equal to 
the min-rank of $\D$. Then the sub-matrix $\bM'$ of $\bM$ restricted
to the rows and columns indexed by the vertices in $\V(\D')$
is a matrix that fits $\D'$. Then 
\[
\mrq(\D') \leq \rank(\bM') \leq \rank(\bM) = \mrq(\D). 
\]
Now let $\D'$ be a maximum acyclic induced subgraph of $\D$ of order $\aD$. 
Since $\D'$ is acyclic, by Proposition~\ref{thm:mr_n} we have
\[
\mrq(\D) \geq \mrq(\D') = |\V(\D')| = \aD. \qedhere
\]
\end{proof} 
\vskip 10pt 

\begin{corollary}
For a digraph $\D$, $\bt_q(\D) = |\VD|$ if and only if $\D$ is acyclic. 
For a graph $\G$, $\bt_q(\G) = |\VG|$ if and only if $\G$ has no edges.  
\end{corollary} 
\begin{proof}
Suppose $\bt_q(\D) = |\VD|$. By Theorem~\ref {thm:sandwiched_theorem_graph}, 
$\mrq(\D) = |\VD|$. Therefore, $\D$ is acyclic according to Proposition~\ref{thm:mr_n}. 
Conversely, if $\D$ is acyclic then $\bt_q(\D) \geq \aD = |\VD|$. 
Similar arguments hold for graphs. 
\end{proof} 

\subsection{Graphs of Min-Ranks One Less Than Their Orders}
\label{subsec:mr=n-1}

In this section, we consider (undirected) graphs.
The corresponding case for digraphs is open.  
For a connected graph $\G$ of order at least two, it is easy to see that
$\mG = 1$ if and only if it is a \emph{star graph}, which is defined as follows. 

\begin{definition}   
A graph $\G = (\VG, \EG)$ is called a \emph{star graph}
if $|\VG| \geq 2$ and there exists a vertex $v \in \VG$ such that 
$\EG = \big\{\{u,v\}: \ u \in \VG \setminus \{v\}\big\}$. 
\end{definition} 

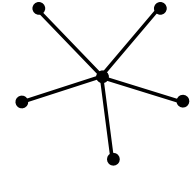
\begin{figure}[htb]
\centering{
\scalebox{0.8} % Change this value to rescale the drawing.
{
\begin{pspicture}(0,-1.4158349)(3.000441,1.4422222)
\psdots[dotsize=0.21200001](0.42417023,1.2762222)
\psdots[dotsize=0.21200001](2.4441702,1.2762222)
\psdots[dotsize=0.21200001,dotangle=-27.662416](0.14309572,-0.28401718)
\psdots[dotsize=0.21200001](1.4841702,0.13622223)
\psdots[dotsize=0.21200001,dotangle=31.584677](2.8194814,-0.26963037)
\psline[linewidth=0.06cm](1.5241703,0.21622223)(2.4041703,1.2562222)
\psline[linewidth=0.06cm](1.5290428,0.11085437)(2.7474189,-0.26698118)
\psline[linewidth=0.06cm](1.4441702,0.17622222)(0.42417023,1.2362223)
\psline[linewidth=0.06cm](1.5023404,0.15513396)(0.1970941,-0.2671597)
\psdots[dotsize=0.21200001,dotangle=-33.780685](1.6653426,-1.2351804)
\psline[linewidth=0.06cm](1.4733033,0.096405506)(1.6377128,-1.1685725)
\end{pspicture} 
}
}
\caption{A star graph}
\label{fig:stargraph}
\end{figure}

It is straightforward to see that if $\mG = 1$ then $\aG = n - 1$, as $\G$ is a star graph.  

\begin{proposition}
\label{thm:star_graph}
Let $\G$ be a connected graph of order $n \geq 2$. Then $\mrq(\G) = n - 1$ if and only if 
$\mG = 1$ (or equivalently, $\G$ is a star graph).  
\end{proposition}
\begin{proof}
We first suppose that $\mrq(\G) = n - 1$. 
By the maximum-matching bound, $n - 1 = \mrq(\G) \leq n - \mG$. 
Therefore, $\mG \leq 1$. 
However, as $\mrq(\G) \neq n$, by Proposition~\ref{pro:undirected_rank_equal_order}
we have $\mG \neq 0$. Hence, $\mG = 1$. 

Conversely, assume that $\mG = 1$.
By the maximum-matching bound, $\mrq(\G) \leq n-1$. 
By Theorem~\ref{thm:sandwiched_theorem_graph}, $\mrq(\G) \geq \al(\G) = n - 1$.
Thus, $\mrq(\G) = n - 1$.   	
\end{proof} 
\vskip 10pt 

\begin{corollary}
Let $\G$ be a connected graph of order $n \geq 2$. 
Then $\bt_q(\G) = n-1$ 
if and only if $\mG = 1$ ($\G$ is a star graph). 
\end{corollary} 
\begin{proof}
Suppose $\bt_q(\G) = n-1$. Then either $\mrq(\G) = n - 1$
or $\mrq(\G) = n$. However, by Proposition~\ref{pro:undirected_rank_equal_order}, 
$\mrq(\G) = n$ implies that $\G$ has no edge. As a consequence, 
$\bt_q(\G) \geq \aG = n$, which contradicts our assumption. 
Hence, $\mrq(\G) = n - 1$. According to Proposition~\ref{thm:star_graph}, 
$\mG = 1$. 

Conversely, suppose that $\mG = 1$. According to Proposition~\ref{thm:star_graph},
we have
\[
n - 1 = \aG \leq \bt_q(\G) \leq \mrq(\G) = n-1.
\] 
Hence, $\bt_q(\G) = n - 1$. 
\end{proof} 

\subsection{Graphs of Min-Ranks Two Less Than Their Orders}
\label{subsec:mr=n-2}

In this section, we consider (undirected) graphs.
The corresponding case for digraphs is open. 
Here we also employ 
the matching language to characterize graphs of min-ranks
two less than their orders. 

\vskip 10pt 
\begin{theorem} 
\label{thm:mr_n-2}
Suppose $\G$ is a connected graph of order $n \geq 6$. 
Then $\mrq(\G) = n - 2$ if and only if $\mG = 2$ 
and $\G$ does not contain a subgraph isomorphic to 
the graph $F$ depicted in Fig.~\ref{fig:forbidden_subgraph}.  

\begin{figure}[h]
\centering
\scalebox{1} % Change this value to rescale the drawing.
{
\begin{pspicture}(0,-0.5674989)(2.196694,0.5674989)
\psline[linewidth=0.04cm,dotsize=0.07055555cm 2.0]{**-**}(0.02,0.54749894)(0.02,-0.53250104)
\psline[linewidth=0.04cm,dotsize=0.07055555cm 2.0]{**-}(0.9181049,-0.050460596)(0.0,0.48749894)
\psline[linewidth=0.04cm,dotsize=0.07055555cm 2.0]{**-}(1.5980058,-0.008810784)(0.8980079,-0.010509237)
\psline[linewidth=0.04cm](0.89791083,0.029490646)(0.02,-0.45250106)
\psline[linewidth=0.04cm,dotsize=0.07055555cm 2.0]{-**}(1.5579575,0.0110921)(2.176694,0.5325949)
\psline[linewidth=0.04cm,dotsize=0.07055555cm 2.0]{-**}(1.538006,-0.008956365)(2.1393144,-0.54749894)
\end{pspicture} 
}
\caption{The forbidden subgraph $F$}
\label{fig:forbidden_subgraph}
\end{figure}
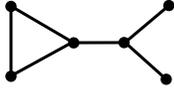
\end{theorem} 

The proof of this theorem appears in Appendix.

\begin{corollary}
If $\mG = 2$ and $\G$ contains a subgraph isomorphic to $F$ (Fig.~\ref{fig:forbidden_subgraph}) then $\mrq(\G) = |\VG| - 3$.  
\end{corollary} 
\begin{proof}
Suppose $F'$ (Fig.~\ref{fig:F'}) is a subgraph of $\G$ that is isomorphic to $F$. 

\begin{figure}[h]
\centering
\scalebox{1} % Change this value to rescale the drawing.
{
\begin{pspicture}(0,-0.698125)(3.2828126,0.698125)
\psline[linewidth=0.04cm,dotsize=0.07055555cm 2.0]{**-**}(0.5809375,0.5796875)(0.5809375,-0.5003125)
\psline[linewidth=0.04cm,dotsize=0.07055555cm 2.0]{**-}(1.4790424,-0.018272037)(0.5609375,0.5196875)
\psline[linewidth=0.04cm,dotsize=0.07055555cm 2.0]{**-}(2.1589434,0.023377776)(1.4589454,0.021679323)
\psline[linewidth=0.04cm](1.4588484,0.061679207)(0.5809375,-0.4203125)
\psline[linewidth=0.04cm,dotsize=0.07055555cm 2.0]{-**}(2.118895,0.04328066)(2.7376316,0.56478345)
\psline[linewidth=0.04cm,dotsize=0.07055555cm 2.0]{-**}(2.0989435,0.023232196)(2.7002518,-0.5153104)
\usefont{T1}{ptm}{m}{n}
\rput(0.22234374,0.5096875){$a$}
\usefont{T1}{ptm}{m}{n}
\rput(0.23234375,-0.3903125){$b$}
\usefont{T1}{ptm}{m}{n}
\rput(1.3623438,-0.2703125){$c$}
\usefont{T1}{ptm}{m}{n}
\rput(2.0323439,-0.2703125){$d$}
\usefont{T1}{ptm}{m}{n}
\rput(2.9823437,0.5096875){$f$}
\usefont{T1}{ptm}{m}{n}
\rput(2.9823437,-0.4703125){$g$}
\end{pspicture} 
}
\caption{The subgraph $F'$}
\label{fig:F'}
\end{figure}
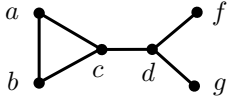

As $\G$ does not have a matching of size three, each of the vertices
$c$, $f$, and $g$ is not adjacent to any vertex in $\VG \setminus \V(F')$. 
Moreover, no pairs of vertices in $\VG \setminus \V(F')$ are adjacent
for the same reason. Therefore, $\{c,f,g\} \cup (\VG \setminus \V(F'))$
is an independent set of size $|\VG| - 3$ in $\G$. Hence, $\mrq(\G) \geq 
\aG \geq |\VG| - 3$. As $\mG = 2$, by the maximum-matching bound,
$\mrq(\G) \leq |\VG| - 2$.  
As $\G$ contains $F'$, which is isomorphic to $F$, by Theorem~\ref{thm:mr_n-2}, 
$\mrq(\G) \neq |\VG| - 2$. Thus, $\mrq(\G) = |\VG| - 3$.  
\end{proof} 
\vskip 10pt 

\begin{corollary}
Theorem~\ref{thm:mr_n-2} holds verbatim if we replace $\mrq(\cdot)$
by $\bt_q(\cdot)$. 
\end{corollary}
 
\begin{proof} 
Suppose that $\bt_q(\G) = n - 2$. Then $\mrq(\G) \in \{n-2,n-1,n\}$. 
By Proposition~\ref{pro:undirected_rank_equal_order}, Proposition~\ref{thm:star_graph}, and their corollaries, for $\kp \in \{n-1,n\}$, 
$\mrq(\G) = \kp$ if and only if $\bt_q(\G) = \kp$. 
Therefore, $\mrq(\G) = n - 2$. 
According to Theorem~\ref{thm:mr_n-2}, $\mG = 2$ and $\G$ does not contain a subgraph isomorphic to $F$. 

Conversely, as shown in the proof of Theorem~\ref{thm:mr_n-2} (the IF direction),
$\aG = \mrq(\G) = n - 2$. Therefore, $\bt_q(\G) = n -2$ by Theorem~\ref{thm:sandwiched_theorem_graph}.
\end{proof} 

\section{The Hardness of the Min-Rank Problem for Digraphs}
\label{sec:MR_hardness}

In this section, we first prove that it is an NP-complete problem to decide whether a given digraph 
is fairly $k$-colorable (see Definition~\ref{def:fair_coloring}), for any given $k \geq 3$. The hardness of this problem, 
by Lemma~\ref{lem:complete_graph_pp3} and Corollary~\ref{coro:mr_color}, 
leads to the hardness of the decision problem whether a given digraph has 
min-rank two over $\ft$. 
%The hardness of the same problem for the other alphabets is still open. 
The fair $k$-coloring problem is defined formally as follows. 

\vskip 20pt
\begin{center} 
\fbox{
\parbox{3.2in}{
\vskip 3pt
{\it Problem}:\ {\bf FAIR $\boldsymbol{k}$-COLORING}\\
{\it Input}:\ A digraph $\D$, an integer $k$\\
{\it Output}:\ True if $\D$ is fairly $k$-colorable, False otherwise
\vskip 3pt
}}
\end{center}

\vskip 10pt 
\begin{theorem}
\label{thm:fair_coloring_NP}
The fair $k$-coloring problem is NP-complete for $k \geq 3$.  
\end{theorem}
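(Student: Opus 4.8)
The plan is to verify the two standard ingredients of NP-completeness. Membership in NP is immediate: a fair $k$-coloring $\phi\colon\VD\to[k]$ is a polynomial-size certificate, and both (C1) and (C2) of Definition~\ref{def:fair_coloring} can be checked in polynomial time by scanning the arcs of $\D$ and, for each vertex, comparing the colors of its out-neighbors. For hardness I would fix an integer $k\ge 3$ and reduce from the classical graph $k$-colorability problem, which is NP-complete for every fixed $k\ge 3$. Thus, given an undirected graph $G=(V,E)$, the goal is to construct in polynomial time a digraph $\D$ that is fairly $k$-colorable if and only if $G$ is $k$-colorable.

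The step I expect to be the main obstacle is that condition (C2) most naturally encodes \emph{equality} rather than the \emph{inequality} that an edge constraint requires: a vertex $x$ whose out-neighborhood is $\{u,w\}$ forces $\phi(u)=\phi(w)$, the opposite of what we want, and moreover if several edge-gadgets sent arcs out of a common vertex $u$ their out-neighbor constraints would conflict. Two ideas resolve this. First, I make every original vertex a pure sink of $\D$ (the head of arcs but never the tail), so that (C2) is vacuous at every vertex of $V$ no matter how many gadgets meet there, which decouples distinct edges. Second, I synthesize the required inequality from one genuine inequality arc together with one equality enforcer. Concretely, for each edge $e=\{u,w\}$ I introduce two fresh auxiliary vertices $z_e,p_e$ and the three arcs $(z_e,w)$, $(p_e,z_e)$, $(p_e,u)$. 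The arc $(z_e,w)$ forces $\phi(z_e)\ne\phi(w)$ by (C1), while (C2) applied at $p_e$, whose only out-neighbors are $z_e$ and $u$, forces $\phi(z_e)=\phi(u)$; chaining these gives $\phi(u)=\phi(z_e)\ne\phi(w)$. Note that $z_e$ has the single out-neighbor $w$, so (C2) is trivially satisfied there, and $p_e$ is never a head, so its only constraint is the one we use.

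It then remains to verify the equivalence and the size bound, both routine. If $G$ admits a proper $k$-coloring $c$, I extend it by setting $\phi|_V=c$, $\phi(z_e)=c(u)$ for each edge $e=\{u,w\}$, and $\phi(p_e)$ equal to any color distinct from $c(u)$ (available since $k\ge 2$); one checks that (C1) and (C2) hold at every arc and vertex of $\D$. Conversely, any fair $k$-coloring of $\D$ restricts to a map on $V$ which, by the gadget argument above applied edge by edge, satisfies $\phi(u)\ne\phi(w)$ for every $\{u,w\}\in E$, hence is a proper $k$-coloring of $G$. Since $\D$ has $|V|+2|E|$ vertices and $3|E|$ arcs, the construction is polynomial, so the reduction establishes NP-hardness, and together with membership in NP this proves the theorem. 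In particular, taking $k=3$ and invoking Proposition~\ref{lem:complete_graph_pp3} and Corollary~\ref{coro:mr_color} yields the promised NP-completeness of deciding whether $\mrt(\D)\le 2$.
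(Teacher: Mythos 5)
Your proof is correct, and it follows the same overall strategy as the paper --- membership in NP plus a polynomial reduction from graph $k$-coloring, where the crucial trick in both cases is to manufacture the edge inequality $\phi(u)\neq\phi(w)$ by combining one (C1) arc with one ``equality enforcer'': a fresh source vertex whose two out-neighbors are forced to share a color by (C2). The difference is in how the gadgets are organized. The paper works \emph{per vertex}: for each vertex $i$ of $\G$ it creates a clone $\omega_{i,i_j}$ of every neighbor $i_j$, each clone pointing into $i$, and a single controller $p_i$ whose out-neighborhood is $i$ together with \emph{all} clones of $i$ scattered across the gadgets, so that one application of (C2) synchronizes every clone of $i$ with $i$ simultaneously; this yields $2|\VG|+2|\EG|$ vertices and $|\VG|+4|\EG|$ arcs, and each edge's constraint is actually enforced twice (once in each direction). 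Your construction works \emph{per edge}: each edge $\{u,w\}$ gets its own private pair $z_e,p_e$ with arcs $(z_e,w)$, $(p_e,z_e)$, $(p_e,u)$, giving $|\VG|+2|\EG|$ vertices and $3|\EG|$ arcs. What your variant buys is modularity and economy --- the correctness argument is strictly local to each edge (no global bookkeeping about which clones belong to which original vertex), original vertices are pure sinks so (C2) is vacuous there by design, and the instance is smaller. What the paper's variant buys is a uniform ``clone dictionary'' that makes the two directions of the equivalence read off symmetrically. Both reductions are valid, so your argument is a complete and slightly leaner proof of the theorem.
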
   
\begin{proof} 
This problem is obviously in NP, as the algorithm 
can guess a candidate for the fair coloring and verify that the candidate is indeed a 
fair coloring in polynomial time. 
For NP-hardness, we reduce the $k$-coloring problem to the fair $k$-coloring problem.
Recall that the $k$-coloring problem is the decision problem whether a given graph is $k$-colorable. 
Suppose that $\G = (\VG, \EG)$ is an arbitrary graph. We aim to build a digraph 
$\D = (\VD, \ED)$ so that $\G$ is $k$-colorable if and only if $\D$ is fairly $k$-colorable. 
Suppose that $\VG = [n]$. For each vertex $i \in [n]$, we build the following gadget, which is a digraph 
$\D_i = (\V_i, \E_i)$. The vertex set of $\D_i$ is
\[
\V_i = \{i\} \cup \big\{\omega_{i,j}: \ j \in N^{\G}(i)\big\},  
\]
where $\omega_{i,j}$ are newly introduced vertices. We refer to $\omega_{i,j}$
as a \emph{clone} (in $\D_i$) of the vertex $j \in [n]$. 
The arc set of $\D_i$ is
\[
\E_i = \big\{(\omega_{i, j}, i): \ j \in N^{\G}(i)\big\}. 
\]
Let $N^{\G}(i) = \{i_1,i_2,\ldots,i_{n_i}\}$. 
Then $\D_i$ can be drawn as in Fig.~\ref{fig:vertex_gadget}.

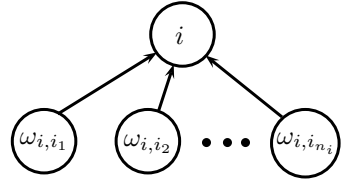
\begin{figure}[htb]
\centering{
\scalebox{1} % Change this value to rescale the drawing.
{
\begin{pspicture}(0,-1.1792967)(6.3628125,1.1792967)
\pscircle[linewidth=0.04,dimen=outer](3.07625,0.73695296){0.4423438}
\psdots[dotsize=0.12](3.3939066,-0.70070326)
\psdots[dotsize=0.12](3.6739056,-0.70070326)
\psdots[dotsize=0.12](3.913906,-0.70070326)
\usefont{T1}{ptm}{m}{n}
\rput(3.0423436,0.72929674){$i$}
\usefont{T1}{ptm}{m}{n}
\rput(1.2423438,-0.6907033){$\omega_{i,i_1}$}
\usefont{T1}{ptm}{m}{n}
\rput(2.6223438,-0.73070323){$\omega_{i,i_2}$}
\usefont{T1}{ptm}{m}{n}
\rput(4.7323437,-0.71070325){$\omega_{i,i_{n_i}}$}
\psline[linewidth=0.04cm,arrowsize=0.05291667cm 2.0,arrowlength=1.4,arrowinset=0.4]{->}(1.4185936,-0.32070324)(2.74,0.49929675)
\psline[linewidth=0.04cm,arrowsize=0.05291667cm 2.0,arrowlength=1.4,arrowinset=0.4]{->}(2.7585938,-0.28070325)(2.96,0.33929676)
\psline[linewidth=0.04cm,arrowsize=0.05291667cm 2.0,arrowlength=1.4,arrowinset=0.4]{->}(4.4,-0.36070326)(3.38,0.45929676)
\pscircle[linewidth=0.04,dimen=outer](2.6162498,-0.68304706){0.4423438}
\pscircle[linewidth=0.04,dimen=outer](4.7092967,-0.71){0.46929675}
\pscircle[linewidth=0.04,dimen=outer](1.23625,-0.68304706){0.4423438}
\end{pspicture} 
}
}
\caption{Gadget $\D_i$ for each vertex $i$ of $\G$}
\label{fig:vertex_gadget}
\end{figure}
\vskip 10pt 

Additionally, we also introduce $n$ new vertices $p_1,p_2,\ldots,p_n$. The digraph $\D = (\VD, \ED)$ is built as follows. The vertex set of $\D$ is
\[ 
\VD = \big( \cup_{i=1}^n \V_i \big) 
\cup \{p_1,p_2,\ldots,p_n\}.
\] 
Let
\[
\Q_i = \big\{(p_i,i)\big\} \cup \big\{(p_i, \omega_{i', i}): \ i' \in [n],\ i \in N^{\G}(i') \big\}
\]
be the set consisting of $(p_i,i)$ and the arcs that connect $p_i$ and all the clones $\omega_{i', i}$ of $i$. The arc set of $\D$ is then defined to be
\[
\ED = \big( \cup_{i = 1}^n \E_i \big) 
\cup \big( \cup_{i = 1}^n \Q_i \big). 
\]  

\begin{figure}[htb]
\centering{
\scalebox{1} % Change this value to rescale the drawing.
{
\begin{pspicture}(0,-1.0023438)(3.1046875,1.0023438)
\pscircle[linewidth=0.04,dimen=outer](2.6623437,0.56){0.4423438}
\pscircle[linewidth=0.04,dimen=outer](0.4423438,0.02){0.4423438}
\pscircle[linewidth=0.04,dimen=outer](2.6423438,-0.56){0.4423438}
\usefont{T1}{ptm}{m}{n}
\rput(0.4189063,0.0023438){$1$}
\usefont{T1}{ptm}{m}{n}
\rput(2.638906,0.5623438){$2$}
\usefont{T1}{ptm}{m}{n}
\rput(2.698906,-0.5776562){$3$}
\psline[linewidth=0.04cm](0.84,0.1423438)(2.26,0.4823438)
\psline[linewidth=0.04cm](0.8646876,-0.1076562)(2.22,-0.4576562)
\end{pspicture} 
}
}
\caption{An example of the graph $\G$}
\label{fig:G}
\end{figure}
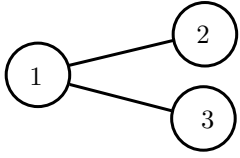
\vskip 10pt 

For example, if $\G$ is the graph in Fig.~\ref{fig:G}, then $\D$ is the digraph
in Fig.~\ref{fig:D}.

\vskip 10pt 
\begin{figure}[H]
\centering{
%\fbox{
\scalebox{1} % Change this value to rescale the drawing.
{
\begin{pspicture}(0,-2.1172173)(8.822812,2.4874701)
\pscircle[linewidth=0.04,dimen=outer](2.2052445,2.0451264){0.4423438}
\usefont{T1}{ptm}{m}{n}
\rput(2.2004008,2.02747){$1$}
\usefont{T1}{ptm}{m}{n}
\rput(1.1623437,0.24810487){$\omega_{1,2}$}
\usefont{T1}{ptm}{m}{n}
\rput(5.3423448,0.26810488){$\omega_{2,1}$}
\usefont{T1}{ptm}{m}{n}
\rput(3.102344,0.24810487){$\omega_{1,3}$}
\usefont{T1}{ptm}{m}{n}
\rput(7.582343,0.26810488){$\omega_{3,1}$}
\usefont{T1}{ptm}{m}{n}
\rput(2.1523442,-1.6518952){$p_1$}
\usefont{T1}{ptm}{m}{n}
\rput(5.3123446,-1.6118952){$p_2$}
\usefont{T1}{ptm}{m}{n}
\rput(7.5123444,-1.6118952){$p_3$}
\psline[linewidth=0.04cm,arrowsize=0.05291667cm 2.0,arrowlength=1.4,arrowinset=0.4]{->}(1.34,0.6381048)(2.0953126,1.6581048)
\psline[linewidth=0.04cm,arrowsize=0.05291667cm 2.0,arrowlength=1.4,arrowinset=0.4]{->}(2.9953127,0.6781049)(2.3553126,1.6781049)
\psline[linewidth=0.04cm,arrowsize=0.05291667cm 2.0,arrowlength=1.4,arrowinset=0.4]{->}(5.36,0.7181048)(5.355313,1.6381049)
\psline[linewidth=0.04cm,arrowsize=0.05291667cm 2.0,arrowlength=1.4,arrowinset=0.4]{->}(7.595313,0.7181049)(7.615313,1.6581048)
\rput{-85.47715}(5.156191,5.580168){\psarc[linewidth=0.04,arrowsize=0.05291667cm 2.0,arrowlength=1.4,arrowinset=0.4]{->}(5.5975986,0.0){2.8017166}{52.769493}{122.71867}}
\psline[linewidth=0.04cm,arrowsize=0.05291667cm 2.0,arrowlength=1.4,arrowinset=0.4]{->}(7.14,-1.4818952)(3.4353127,0.038104873)
\psline[linewidth=0.04cm,arrowsize=0.05291667cm 2.0,arrowlength=1.4,arrowinset=0.4]{->}(4.98,-1.4618952)(1.5353128,0.058104884)
\rput{-85.47715}(3.2998106,3.8335574){\psarc[linewidth=0.04,arrowsize=0.05291667cm 2.0,arrowlength=1.4,arrowinset=0.4]{->}(3.7242947,0.13120697){2.4633672}{48.392315}{125.42313}}
\psline[linewidth=0.04cm,arrowsize=0.05291667cm 2.0,arrowlength=1.4,arrowinset=0.4]{->}(2.18,-1.2418952)(2.215313,1.6581048)
\psline[linewidth=0.04cm,arrowsize=0.05291667cm 2.0,arrowlength=1.4,arrowinset=0.4]{->}(2.48,-1.3818952)(5.0153127,0.05810486)
\psline[linewidth=0.04cm,arrowsize=0.05291667cm 2.0,arrowlength=1.4,arrowinset=0.4]{->}(2.58,-1.6618952)(7.255313,0.038104903)
\pscircle[linewidth=0.04,dimen=outer](5.3652444,2.0451264){0.4423438}
\usefont{T1}{ptm}{m}{n}
\rput(5.3604007,2.02747){$2$}
\pscircle[linewidth=0.04,dimen=outer](7.5852447,2.0451264){0.4423438}
\usefont{T1}{ptm}{m}{n}
\rput(7.580401,2.02747){$3$}
\pscircle[linewidth=0.04,dimen=outer](1.1652445,0.2451264){0.4423438}
\pscircle[linewidth=0.04,dimen=outer](3.0852444,0.2451264){0.4423438}
\pscircle[linewidth=0.04,dimen=outer](5.3452444,0.2651264){0.4423438}
\pscircle[linewidth=0.04,dimen=outer](7.5852447,0.2651264){0.4423438}
\pscircle[linewidth=0.04,dimen=outer](2.1652446,-1.6748736){0.4423438}
\pscircle[linewidth=0.04,dimen=outer](5.3452444,-1.6148736){0.4423438}
\pscircle[linewidth=0.04,dimen=outer](7.5452447,-1.6148736){0.4423438}
\end{pspicture} 
}
}
\caption{The digraph $\D$ built from the graph $\G$ in Fig.~\ref{fig:G}}
\label{fig:D}
\end{figure}
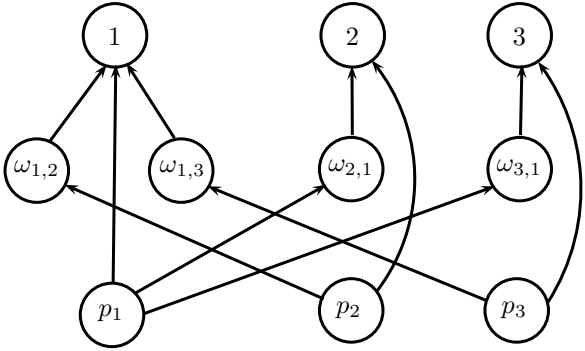
\vskip 10pt 

Our goal now is to show that $\G$ is $k$-colorable if and only if 
$\D$ is fairly $k$-colorable. 

Suppose that $\G$ is $k$-colorable and $\phi_{\G}: [n] \ra [k]$ is a $k$-coloring of $\G$. We consider the mapping $\phi_{\D}: \VD \ra [k]$ defined as follows

\begin{enumerate}
	\item For every $i \in [n]$, $\phi_{\D}(i)  \define \phi_{\G}(i)$;
	\item If $i \in N^{\G}(i')$ then $\phi_{\D}(\omega_{i',i}) \define \phi_{\D}(i) = \phi_{\G}(i)$, in other words, clones of $i$ have the same color as $i$;  
  \item For every $i \in [n]$, $\phi_{\D}(p_i)$ can be chosen arbitrarily, as long as it is different from $\phi_{\D}(i)$. 
\end{enumerate}

We claim that $\phi_{\D}$ is a fair $k$-coloring for $\D$. We first verify the condition (C1) (see Definition~\ref{def:fair_coloring}). It is straightforward from the definition of $\phi_{\D}$ that the endpoints of each of the arcs of the forms $(p_i,i)$ for $i \in [n]$, and $(p_i, \omega_{i', i})$ for $i \in N^{\G}(i')$, have different colors. It remains to check if $i$ and $\omega_{i,j}$ for $j \in N^{\G}(i)$ have different colors. On the one hand, $\omega_{i,j}$ is a clone of $j$, and hence has the same color as $j$. In other words, 
\[
\phi_{\D}(\omega_{i, j}) = \phi_{\D}(j) = \phi_{\G}(j).
\] 
On the other hand, since $j \in N^{\G}(i)$, we obtain that
\[
\phi_{\G}(j) \neq \phi_{\G}(i) = \phi_D(i).
\]
Therefore, $\phi_{\D}(\omega_{i,j}) \neq \phi_{\D}(i)$ for all $i \in [n]$ and $j \in N^{\G}(i)$. 
Thus, (C1) is satisfied.  
 
We now check if (C2) (see Definition~\ref{def:fair_coloring}) is also satisfied. 
The out-neighbors of $p_i$ are $i$ and its clones $\omega_{i',i}$ ($i \in N^{\G}(i')$). 
These vertices have the same color in $\D$, namely $\phi_{\G}(i)$, by the definition of $\phi_{\D}$.
Thus (C2) is also satisfied. Therefore $\phi_{\D}$ is a fair $k$-coloring of $\D$.  

Conversely, suppose that $\phi_{\D}: \VD \ra [k]$ is a fair $k$-coloring of $\D$. 
Condition (C2) guarantees that all clones of $i$ have the same color as $i$, namely, $\phi_{\D}(\omega_{i',i}) = \phi_{\D}(i)$ if $i \in N^{\G}(i')$. 
Therefore, by (C1), if $\{i,j\} \in \EG$, that is, $j \in N^{\G}(i)$, then 
\[
\phi_{\D}(i) \neq \phi_{\D}(\omega_{i,j}) = \phi_{\D}(j). 
\]
Hence, if we define $\phi_{\G}: [n] \ra [k]$ by $\phi_{\G}(i) = \phi_{\D}(i)$ for all $i \in [n]$, then it is a $k$-coloring of $\G$. Thus $\G$ is $k$-colorable. 

Finally, notice that the order of $\D$ is a polynomial with respect to the order of $\G$. More specifically, $|\VD| = 2|\VG| + 2|\EG|$ and $|\ED| = |\VG| + 4|\EG|$. Moreover, building $\D$ from $\G$, and also obtaining a coloring of $\G$ from a coloring of $\D$, can be done in polynomial time with respect to the order of $\G$. 
Since the $k$-coloring problem ($k \geq 3$) is NP-hard \cite{Karp1972}, 
we conclude that the fair $k$-coloring problem is also NP-hard. 
\end{proof} 
\vskip 10pt 

%\begin{corollary}
%\label{coro:NP_complete}
%Given an arbitrary digraph $\D$, the decision problem whether $\mrt(\D) = 2$ is NP-complete.  
%\end{corollary} 
%\vskip 10pt 
According to Theorem~\ref{thm:fair_coloring_NP} and the work by Blasiak 
{\et}~\cite{BlasiakKleinbergLubetzky2011} (see the discussion after 
Corollary~\ref{coro:mr_color}), we obtain the following. 

\vskip 10pt 
\begin{theorem}
\label{thm:NP_complete}
Let $\D$ be an arbitrary digraph. Then the decision problem whether $\mrt(\D) = 2$ is NP-complete. 
However, the decision problem whether $\bt_2(\D) = 2$ can be solved in polynomial time. 
\end{theorem} 
\vskip 10pt 

Recall that by contrast, for a graph $\G$, it was observed by Peeters~\cite{Peeters96} that $\G$ has min-rank two if and only if $\Gc$ is a bipartite graph and $\G$ is not a complete graph, which can be verified in polynomial time
(see, for instance, West~\cite[p. 495]{West}). Note that a graph is bipartite if and only if it is $2$-colorable.  
This fact can also be derived by applying Theorem~\ref{thm:mr_color} to the digraph obtained from $\G$ by replacing each edge of $\G$ by two arcs of opposite directions. 
%We also mention earlier at the end of Section~\ref{subsec:mr_two} that 
%by Blasiak {\et}~\cite{BlasiakKleinbergLubetzky2011}, there is a polynomial time
%algorithm to decide whether $\bt_2(\D) = 2$ for a given digraph $\D$. 

\section{Circuit-Packing Bound}
\label{sec:circuit_packing_bound}

In this section we introduce a new upper bound for the min-rank of 
a digraph. This bound reveals some new families of digraphs whose min-ranks are computable in polynomial time. 

\subsection{The Bound}
\label{subsec:circuit_packing_bound}

Let $\nD$ be the \emph{circuit packing number} of $\D$, namely, the maximum number of 
vertex-disjoint circuits in $\D$. Below, we establish an upper bound on min-ranks of digraphs,
which uses the circuit packing number. This bound was first presented by Chaudhry {\et} in~\cite{ChaudhryAsadSprintsonLangberg2011}, and was obtained independently by the authors of this paper approximately at the same time.

\vskip 5pt 
\begin{proposition}[Circuit-packing bound]       
\label{pro:circuit_packing_bound}
The following holds for every digraph $\D$ of order $n$:
\[
\mrq(\D) \leq n - \nD. 
\]
\end{proposition} 
\begin{proof}
Suppose $\D$ contains $\nD$ vertex-disjoint circuits $\C_1, \C_2, \ldots, \C_{\nD}$, 
where 
\[
\C_i = \big(u_{i,1}, u_{i,2}, \ldots, u_{i, n_i}\big), \ i \in [\nD],\ 2 \leq n_i \leq n. 
\]
Let $\V(\C_i) = \{u_{i,1}, u_{i,2}, \ldots, u_{i, n_i}\}$ ($i \in [\nD]$). 
We construct a matrix $\bM$ fitting $\D$ as follows. 
Let 
\[
\A \define \VD \setminus \cup_{i \in [\nD]}\V(\C_i).
\]
For $v \in \A$ let $\bM_v = \be_v$. 
For $i \in [\nD]$ and $s \in [n_i - 1]$, let 
\[
\bM_{u_{i, s}} = \be_{u_{i,s}} - \be_{u_{i, s+1}},
\] 
and let 
\[
\bM_{u_{i, n_i}} = \be_{u_{i, 1}} - \be_{u_{i, n_i}}.
\] 
Clearly, $\bM$ fits $\D$. 
Moreover, as 
\[
\bM_{u_{i, n_i}} = \sum_{s = 1}^{n_i-1} \bM_{u_{i , s}},
\] 
we have
\[
\rank\left(\bM_{\V(\C_i)}\right) \leq n_i - 1
\]
for all $i \in [\nD]$. 
Since $\V(\C_i)$'s, $i \in [\nD]$, are pairwise disjoint, we have
\[
\begin{split} 
\rank(\bM) &\leq \sum_{i = 1}^{\nD} \rank\left(\bM_{\V(\C_i)}\right) + \rank\left(\bM_\A\right)\\
&\leq  \sum_{i = 1}^{\nD} (n_i - 1) + \left(n -  \sum_{i = 1}^{\nD} n_i\right)\\
&= n - \nD. 
\end{split} 
\]
Thus, $\mrq(\D) \leq n - \nD$. 
\end{proof} 
\vskip 10pt 

Whereas for graphs the clique-cover bound is the best known bound, for digraphs that are not symmetric, this is not the case. The worst scenario for the clique-cover bound is when the digraph has no two arcs of opposite directions. For such a digraph, this bound becomes trivial, as the size of the smallest clique cover is equal to the order of the digraph.  
The following example emphasizes the fact that for certain digraphs, the circuit-packing bound can be \emph{significantly tighter} than the clique-cover bound. 

\vskip 10pt 
\begin{example}
\label{ex:comparing_bounds}
Let $\D_k$ be the digraph of order $n = 3k$ depicted in Fig.~\ref{fig:comparing_bounds}. 
As there are no arcs of opposite directions, all cliques in $\D_k$ are of cardinality one. Therefore, the clique-cover 
bound gives $\mrq(\D_k) \leq 3k$. On the other hand, as $\D_k$ contains $k$ vertex-disjoint 
circuits, namely $\C_i = (3i+1, 3i + 2, 3i+3)$ for $i = 0,1,\ldots,k-1$, the circuit-packing bound
yields $\mrq(\D_k) \leq 2k = 3k - k$. The gap between the two bounds is one third of the order of the digraph.  
\end{example} 
 
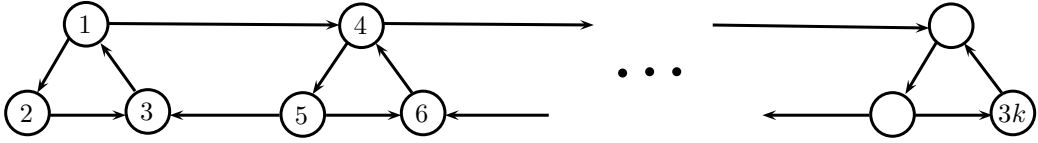
\begin{figure*}[htb]
\centering{
\scalebox{1} % Change this value to rescale the drawing.
{
\begin{pspicture}(0,-0.91)(13.801875,0.91)
\pscircle[linewidth=0.04,dimen=outer](1.09,0.6){0.31}
\usefont{T1}{ptm}{m}{n}
\rput(1.0728124,0.6){$1$}
\pscircle[linewidth=0.04,dimen=outer](0.31,-0.58){0.31}
\usefont{T1}{ptm}{m}{n}
\rput(0.2928125,-0.58){$2$}
\pscircle[linewidth=0.04,dimen=outer](1.91,-0.56){0.31}
\usefont{T1}{ptm}{m}{n}
\rput(1.8928125,-0.56){$3$}
\pscircle[linewidth=0.04,dimen=outer](4.75,0.58){0.31}
\usefont{T1}{ptm}{m}{n}
\rput(4.7328124,0.58){$4$}
\pscircle[linewidth=0.04,dimen=outer](3.97,-0.6){0.31}
\usefont{T1}{ptm}{m}{n}
\rput(3.9528124,-0.62){$5$}
\pscircle[linewidth=0.04,dimen=outer](5.57,-0.58){0.31}
\usefont{T1}{ptm}{m}{n}
\rput(5.5528126,-0.58){$6$}
\psdots[dotsize=0.12](8.22,-0.05)
\psdots[dotsize=0.12](8.58,-0.03)
\psdots[dotsize=0.12](8.94,-0.03)
\pscircle[linewidth=0.04,dimen=outer](12.59,0.58){0.31}
\pscircle[linewidth=0.04,dimen=outer](11.81,-0.6){0.31}
\pscircle[linewidth=0.04,dimen=outer](13.41,-0.58){0.31}
\psline[linewidth=0.04cm,arrowsize=0.05291667cm 2.0,arrowlength=1.4,arrowinset=0.4]{->}(1.38,0.61)(4.46,0.59)
\psline[linewidth=0.04cm,arrowsize=0.05291667cm 2.0,arrowlength=1.4,arrowinset=0.4]{->}(5.04,0.61)(7.84,0.59)
\psline[linewidth=0.04cm,arrowsize=0.05291667cm 2.0,arrowlength=1.4,arrowinset=0.4]{->}(9.42,0.59)(12.3,0.55)
\psline[linewidth=0.04cm,arrowsize=0.05291667cm 2.0,arrowlength=1.4,arrowinset=0.4]{->}(1.74,-0.29)(1.28,0.35)
\psline[linewidth=0.04cm,arrowsize=0.05291667cm 2.0,arrowlength=1.4,arrowinset=0.4]{->}(0.86,0.41)(0.44,-0.31)
\psline[linewidth=0.04cm,arrowsize=0.05291667cm 2.0,arrowlength=1.4,arrowinset=0.4]{->}(0.6,-0.61)(1.62,-0.61)
\psline[linewidth=0.04cm,arrowsize=0.05291667cm 2.0,arrowlength=1.4,arrowinset=0.4]{->}(3.66,-0.61)(2.2,-0.61)
\psline[linewidth=0.04cm,arrowsize=0.05291667cm 2.0,arrowlength=1.4,arrowinset=0.4]{->}(5.42,-0.31)(4.94,0.35)
\psline[linewidth=0.04cm,arrowsize=0.05291667cm 2.0,arrowlength=1.4,arrowinset=0.4]{->}(4.56,0.35)(4.1,-0.33)
\psline[linewidth=0.04cm,arrowsize=0.05291667cm 2.0,arrowlength=1.4,arrowinset=0.4]{->}(4.28,-0.61)(5.28,-0.61)
\psline[linewidth=0.04cm,arrowsize=0.05291667cm 2.0,arrowlength=1.4,arrowinset=0.4]{->}(7.24,-0.61)(5.86,-0.61)
\psline[linewidth=0.04cm,arrowsize=0.05291667cm 2.0,arrowlength=1.4,arrowinset=0.4]{->}(11.5,-0.61)(10.08,-0.61)
\psline[linewidth=0.04cm,arrowsize=0.05291667cm 2.0,arrowlength=1.4,arrowinset=0.4]{->}(13.26,-0.31)(12.76,0.35)
\psline[linewidth=0.04cm,arrowsize=0.05291667cm 2.0,arrowlength=1.4,arrowinset=0.4]{->}(12.4,0.37)(11.98,-0.33)
\psline[linewidth=0.04cm,arrowsize=0.05291667cm 2.0,arrowlength=1.4,arrowinset=0.4]{->}(12.1,-0.61)(13.14,-0.61)
\usefont{T1}{ptm}{m}{n}
\rput(13.401406,-0.58){$3k$}
\end{pspicture} 
}
}
\caption{Example where the circuit-packing bound is tighter than the clique-cover bound}
\label{fig:comparing_bounds}
\end{figure*} 

\subsection{Digraphs Attaining Circuit-Packing Bound}
\label{subsec:graphs_attaining_bound}

In this subsection, we present several new examples of families of digraphs 
that attain the circuit-packing bound. 

A \emph{feedback vertex (arc, respectively) set} of $\D$ is a set of vertices (arcs, respectively) whose removal destroys all
circuits in $\D$. Let $\tD$ ($\teD$, respectively) denote the \emph{minimum size} of a feedback vertex (arc, respectively) set of $\D$.\index{feedback set!vertex}\index{feedback set!arc}\index{$\tD$}\index{$\teD$}  
Then it is clear that $\aD = n - \tD$. 

\vskip 10pt
\begin{corollary}
\label{coro:bounds_coincide}
If $\nD = \tD$ then 
\begin{equation}
\label{equ:sandwich}
\mrq(\D) = n - \nD = n - \tD.
\end{equation}
\end{corollary}
\begin{proof}
By Corollary~\ref{coro:aD} and
Proposition~\ref{pro:circuit_packing_bound} we have
\[
n - \tD \leq \mrq(\D) \leq n - \nD. 
\]
Hence, the proof follows. 
\end{proof} 
\vskip 10pt

When $\D$ satisfies $\nD = \tD$, we say that $\D$ satisfies the \emph{min-max vertex equality}. 
In that case, the circuit-packing bound is tight. 
Similarly, let $\neD$ denote the maximum number of arc-disjoint circuits in $\D$. \index{$\neD$}
We say that $\D$ satisfies the \emph{min-max arc equality} if $\neD = \teD$. 

The  first example of digraphs for which the circuit-packing bound is tight is the \emph{fully reducible flow digraphs} \cite{FrankGyarfas1976}. A flow digraph is a digraph that contains a special vertex called root, from which any vertex is reachable by a directed path. A fully reducible flow digraph is a flow digraph that satisfies the property that every circuit $\C$ in the digraph has a unique vertex $v_\C$ such that every directed path from the root to a vertex of $\C$ must contain $v_\C$.
Interestingly, it was proved by Shamir \cite{Shamir1979} that there is 
a \emph{linear time} algorithm to find $\nD$ ($= \tD$) for a fully reducible flow digraph $\D$. As a consequence, the min-rank of a fully reducible flow digraph (recognizable in polynomial time with respect to its size \cite{Tarjan1974}) can be calculated in linear time with respect to its size.\index{digraph!fully reducible flow}

The second example of digraphs that satisfy the min-max vertex equality is the \emph{connectively reducible digraphs} \cite{Szwarcfiter1989}. This family of digraphs actually generalizes both the family of fully reducible flow digraphs and the family of \emph{cyclically reducible digraphs} \cite{Wang_Lloyd_Soffa1985}. A polynomial time algorithm was provided by Szwarcfiter~\cite{Szwarcfiter1989} to recognize a member of this family and subsequently find a maximum set of vertex-disjoint circuits as well as a minimum feedback vertex set.\index{digraph!connectively reducible} 
Therefore, by Corollary~\ref{coro:bounds_coincide}, 
(\ref{equ:sandwich}) holds
for a connectively reducible digraph $D$. Moreover, 
$\mrq(\D)$ can be found in polynomial time. 
 
The third example of digraphs for which the circuit-packing bound is tight is the digraphs that \emph{pack} \cite{GueninThomas2001}. \index{digraph!that pack}
A digraph packs if the min-max vertex equality holds for all of its subgraphs. The digraphs in this family are exactly ones that have no minor isomorphic to an odd double circuit or $F_7$, a special digraph of order $7$ (interested readers may refer to \cite{GueninThomas2001} for 
more details, also for a structural characterization of this family of digraphs). For instance, \emph{strongly planar} digraphs
\cite{GueninThomas2001} belong to this family. As far as we know, there are no known polynomial time algorithms to find a minimum feedback vertex set of a digraph that packs.  

The other examples of digraphs for which the circuit-packing bound is tight are the 
\emph{line digraphs} of planar digraphs, of fully reducible flow digraphs, and of (special) Eulerian digraphs~\cite{Seymour1996}. 

\vskip 10pt
\begin{definition}  
Let $\D = (\VD, \ED)$ be a digraph. Then the digraph $\LL = (\V(\LL), \E(\LL))$ 
with $\V(\LL) = \ED$ and
\[
\E(\LL) = \big\{(e,e'):\ e = (u,v) \in \ED,\ e' = (v,w) \in \ED \big\},
\]
is called
the \emph{line digraph} of $\D$. We denote the line digraph of $\D$ by $\LD$.
The digraph $\D$ is called a \emph{root digraph} of $\LD$.
\end{definition} 

\begin{figure}
\centering
\subfloat[A digraph $\D$]{
\scalebox{1} % Change this value to rescale the drawing.
{
\begin{pspicture}(0,-1.7137657)(2.890088,1.8062344)
\pscircle[linewidth=0.04,dimen=outer](1.4927025,1.4962343){0.31}
\usefont{T1}{ptm}{m}{n}
\rput(1.4741088,1.4962343){$1$}
\pscircle[linewidth=0.04,dimen=outer](1.4727025,0.05623431){0.31}
\usefont{T1}{ptm}{m}{n}
\rput(1.4541088,0.05623431){$2$}
\pscircle[linewidth=0.04,dimen=outer](1.4727025,-1.4037657){0.31}
\usefont{T1}{ptm}{m}{n}
\rput(1.4541088,-1.4037657){$3$}
\rput{54.15276}(1.3342782,-1.0928671){\psarc[linewidth=0.04,arrowsize=0.05291667cm 2.0,arrowlength=1.4,arrowinset=0.4]{->}(1.7360504,0.75859696){0.8872246}{71.56505}{180.0}}
\rput{54.15276}(0.14254215,-1.6816413){\psarc[linewidth=0.04,arrowsize=0.05291667cm 2.0,arrowlength=1.4,arrowinset=0.4]{->}(1.7160504,-0.7014031){0.8872246}{71.56505}{180.0}}
\rput{234.89763}(2.5697513,-0.1879949){\psarc[linewidth=0.04,arrowsize=0.05291667cm 2.0,arrowlength=1.4,arrowinset=0.4]{->}(1.2360504,-0.7614031){0.8872246}{71.56505}{180.0}}
\rput{234.12204}(1.377506,2.2209172){\psarc[linewidth=0.04,arrowsize=0.05291667cm 2.0,arrowlength=1.4,arrowinset=0.4]{->}(1.2560503,0.75859696){0.8872246}{71.56505}{180.0}}
\usefont{T1}{ptm}{m}{n}
\rput{-0.5897514}(-0.008163664,0.0068681664){\rput(0.6441088,0.8162343){$e_1$}}
\usefont{T1}{ptm}{m}{n}
\rput{-2.3030555}(0.02959154,0.025261521){\rput(0.6241088,-0.7037657){$e_2$}}
\usefont{T1}{ptm}{m}{n}
\rput{-1.602764}(0.02118291,0.06749273){\rput(2.4041088,-0.7037657){$e_3$}}
\usefont{T1}{ptm}{m}{n}
\rput{358.25943}(-0.023076262,0.0739689){\rput(2.4041088,0.8162343){$e_4$}}
\end{pspicture} 
}
}
\subfloat[The line digraph $\LD$]{
\scalebox{1} % Change this value to rescale the drawing.
{
\begin{pspicture}(0,-1.1)(2.5628126,1.1)
\pscircle[linewidth=0.04,dimen=outer](0.42953125,0.79){0.31}
\usefont{T1}{ptm}{m}{n}
\rput(0.40234375,0.79){$e_1$}
\pscircle[linewidth=0.04,dimen=outer](2.0695312,0.79){0.31}
\usefont{T1}{ptm}{m}{n}
\rput(2.0423439,0.79){$e_2$}
\pscircle[linewidth=0.04,dimen=outer](2.0895312,-0.79){0.31}
\usefont{T1}{ptm}{m}{n}
\rput(2.0823438,-0.79){$e_3$}
\pscircle[linewidth=0.04,dimen=outer](0.44953126,-0.77){0.31}
\usefont{T1}{ptm}{m}{n}
\rput(0.42234376,-0.77){$e_4$}
\psline[linewidth=0.04cm,arrowsize=0.05291667cm 2.0,arrowlength=1.4,arrowinset=0.4]{->}(0.71953124,0.82)(1.7995313,0.78)
\psline[linewidth=0.04cm,arrowsize=0.05291667cm 2.0,arrowlength=1.4,arrowinset=0.4]{->}(2.0795312,0.5)(2.0795312,-0.52)
\psline[linewidth=0.04cm,arrowsize=0.05291667cm 2.0,arrowlength=1.4,arrowinset=0.4]{->}(1.7995313,-0.8)(0.75953126,-0.8)
\psline[linewidth=0.04cm,arrowsize=0.05291667cm 2.0,arrowlength=1.4,arrowinset=0.4]{->}(0.45953125,-0.46)(0.45953125,0.5)
\psline[linewidth=0.04cm,arrowsize=0.05291667cm 2.0,arrowlength=1.4,arrowinset=0.4]{->}(0.37953115,0.5)(0.37953115,-0.52)
\psline[linewidth=0.04cm,arrowsize=0.05291667cm 2.0,arrowlength=1.4,arrowinset=0.4]{->}(1.9995313,-0.48)(1.9995313,0.48)
\end{pspicture} 
}
}
\caption{Example of a digraph and its line digraph}
\label{fig:line_digraph}
\end{figure}
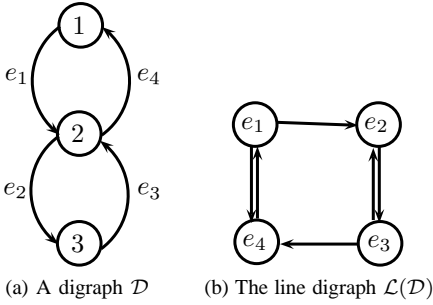 

\begin{lemma} 
\label{lem:first_equality}
$\nu_0(\LD) = \neD$.
\end{lemma}
\begin{proof} 
\mbox{}
\begin{enumerate}
	\item $\nu_0(\LD) \geq \neD$. 
	It suffices to show that the existence of a set of arc-disjoint circuits in $\D$
	implies the existence of a set of vertex-disjoint circuits of the same size in $\LD$. 
	Let $\{\C_1, \C_2, \ldots, \C_k\}$ be a set of arc-disjoint circuits in $\D$, 
	where $\C_i = (v_{i,1}, v_{i,2}, \ldots, v_{i, r_i})$, $r_i \geq 2$, $i \in [k]$.
	Let $e_{i,j} = (v_{i,j}, v_{i,j+1})$, for $i \in [k]$ and $j \in [r_i - 1]$. 
	Moreover, let $e_{i, r_i} = (v_{i,r_i},v_{i,1})$ for $i \in [k]$.
	Let $\C'_i = (e_{i,1}, e_{i,2}, \ldots, e_{i,r_i})$ for $i \in [k]$. 
	Then $\C'_i$ is also a circuit in $\LD$ for every $i \in [k]$.
	Moreover, as the circuits $\C_1, \C_2, \ldots, \C_k$ share no common edges in $\D$, 
	we deduce that $\C'_1, \C'_2, \ldots, \C'_k$ share no common vertices in $\LD$. 
	Therefore, they form a set of $k$ vertex-disjoint circuits in $\LD$. 

\item $\nu_0(\LD) \leq \neD$. 
	It suffices to show that the existence of a set of vertex-disjoint circuits in $\LD$
	implies the existence of a set of arc-disjoint circuits of the same size in $\D$. 
	Let $\{\C'_1, \C'_2, \ldots, \C'_k\}$ be a set of vertex-disjoint circuits in $\LD$, 
	where $\C'_i = \{e_{i,1}, e_{i,2}, \ldots, e_{i, r_i}\}$ for $i \in [k]$. 
	Suppose that $e_{i,j} = (v_{i,j}, v_{i,j+1}) \in \ED$ for $i \in [k]$ and $j \in [r_i]$, 
	where $v_{i,j}$ and $v_{i,j+1}$ are vertices of $\D$. 
	Then $v_{i, r_i + 1} \equiv v_{i,1}$ for $i \in [k]$. 
	For each $i \in [k]$, consider the sequence of (possibly repeated) vertices
	\[
	v_{i,1}, v_{i,2}, \ldots, v_{i, r_i + 1}.
	\]
	Since $v_{i,1} \equiv v_{i,r_i + 1}$ and $(v_{i,j}, v_{i,j+1}) \in \ED$ for all 
	$j \in [r_i]$, there exist $j_0$ and $j_1$ such that
%%%%%%%%%%%%%%
\begin{itemize}
	\item
	$1 \leq j_0 < j_1 \leq r_i$; 
	\item 
	$v_{i,j_0} \equiv v_{i,j_1 + 1}$; 
	\item $v_{i,j_0}, v_{i,j_0 + 1}, \ldots, v_{i, j_1}$ are distinct. 
\end{itemize}
%%%%%%%%%%%%%	
Then $\C_i = (v_{i,j_0}, v_{i,j_0 + 1}, \ldots, v_{i, j_1})$
	is a circuit in $\D$. Since the circuits $\C'_1, \C'_2, \ldots, \C'_k$ share 
	no common vertices in $\LD$, we obtain that the circuits $\C_1,\C_2,\ldots,\C_k$
	share no common edges in $\D$. \qedhere
\end{enumerate}
\end{proof} 
\vskip 10pt 

\begin{lemma}
\label{lem:second_equality}
$\tau_0(\LD) = \teD$. 
\end{lemma}
\begin{proof} 
	Let $F = \{e_1,e_2,\ldots, e_k\}$, where $e_i \in \ED$ for $i \in [k]$, 
	be an arbitrary set of arcs of $\D$. We can also view $F$ as a set of vertices
	of $\LD$. It suffices to show that $F$ is a feedback arc set of $\D$ if and only if 
	$F$ is a feedback vertex set of $\LD$, for every such set $F$. 
	
	Let $\D - F$ be the digraph obtained from $\D$ by removing
	all arcs in $F$. Let $\LD - F$ be the digraph obtained from 
	$\LD$ by removing all vertices in $F$. Then $\LD - F = \LL(\D - F)$. 
	As shown in the proof of Lemma~\ref{lem:first_equality}, 
	the existence of a circuit in $\D - F$ would result in the existence
	of a circuit in $\LL(\D - F)$ and vice versa. 
	Therefore, $\D - F$ is acyclic if and only if $\LD - F$
	is acyclic. Thus, $F$ is a feedback arc set of $\D$ if and only if 
	$F$ is a feedback vertex set of $\LD$. \qedhere
\end{proof} 

\vskip 10pt 
\begin{proposition}
\label{pro:digraph_line-digraph}
Let $\D$ be a digraph. If $\neD = \teD$ then $\nu_0(\LD) = \tau_0(\LD)$ and 
\[
\mrq(\LD) = |\ED| - \neD.
\] 
\end{proposition}
\begin{proof} 
Suppose that $\neD = \teD$. By Lemma~\ref{lem:first_equality} and Lemma~\ref{lem:second_equality}, $\nu_0(\LD) = \tau_0(\LD)$. Therefore, by applying Corollary~\ref{coro:bounds_coincide}
to $\LD$ we obtain
\[
\mrq(\LD) = |\V(\LD)| - \nu_0(\LD) = |\ED| - \neD.\qedhere
\]
\end{proof} 

\begin{definition}       
A digraph that can be drawn on a plane in such a way that its
(arcs) edges intersect only at their endpoints is called \emph{planar}. 
\end{definition}
\vskip 10pt 

It is known that the min-max arc equality is satisfied for 
planar digraphs \cite{LucchesiYounger1978}, for fully reducible flow digraphs \cite{Ramachandran1990}, 
and for a special family of Eulerian digraphs \cite{Seymour1996}. Therefore, 
by Proposition~\ref{pro:digraph_line-digraph}, the min-max vertex equality is satisfied 
for the line digraphs of the members of these families. In summary, we have the following. 

\vskip 10pt 
\begin{corollary}
\label{coro:graphs_attaining_bound}
The circuit-packing bound is tight for the following families of digraphs: 
connectively reducible digraphs, digraphs that pack,
line digraphs of planar digraphs, line digraphs of fully reducible flow digraphs, 
and line digraphs of special Eulerian digraphs. 
\end{corollary}     
\vskip 10pt 

Consider the ICSI instances described by digraphs $\D$ with $\mrq(\D) = \aD$. 
By Theorem~\ref{thm:sandwiched_theorem_graph}, $\mrq(\D) = \bt_q(\D)$. 
Hence, for such instances, \emph{scalar linear} index codes are as good as 
\emph{vector nonlinear} index codes, in terms of transmission rates.
Thus, for the ICSI instances described by families of digraphs listed in
Corollary~\ref{coro:graphs_attaining_bound}, 
scalar linear index codes achieve the best possible transmission rates. 
Previously, only perfect graphs and acyclic digraphs were known to have this property
\cite{Yossef-journal}. 

\vskip 10pt 
\begin{definition}             
A digraph is called \emph{partially planar} if all of its strongly connected components
are planar. 
\end{definition} 
\vskip 10pt 

Since the strongly connected components of a planar digraph are also planar, 
a planar digraph is partially planar. However, the converse is not always true, 
as shown in Fig.~\ref{fig:partially_planar}. 

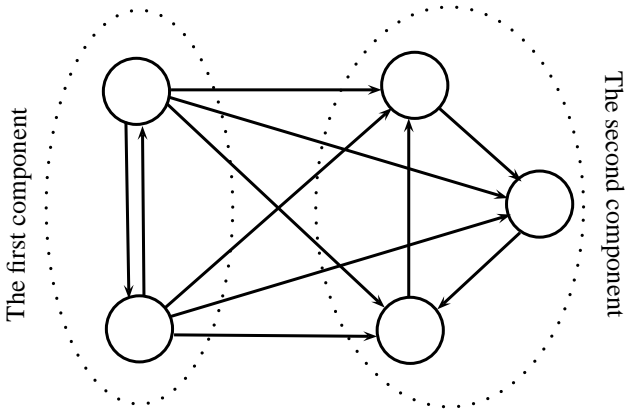
\begin{figure}[htb]
\centering{
\scalebox{1} % Change this value to rescale the drawing.
{
\begin{pspicture}(0,-2.682)(8.240747,2.682)
\pscircle[linewidth=0.04,dimen=middle](5.4157376,1.618){0.44}
\pscircle[linewidth=0.04,dimen=middle](1.7557378,-1.622){0.44}
\pscircle[linewidth=0.04,dimen=middle](1.7157379,1.538){0.44}
\pscircle[linewidth=0.04,dimen=middle](7.075738,0.038){0.44}
\pscircle[linewidth=0.04,dimen=middle](5.3557377,-1.642){0.44}
\psline[linewidth=0.04cm,arrowsize=0.05291667cm 2.0,arrowlength=1.4,arrowinset=0.4]{->}(5.735738,1.318)(6.8357377,0.338)
\psline[linewidth=0.04cm,arrowsize=0.05291667cm 2.0,arrowlength=1.4,arrowinset=0.4]{->}(6.8157377,-0.342)(5.715738,-1.362)
\psline[linewidth=0.04cm,arrowsize=0.05291667cm 2.0,arrowlength=1.4,arrowinset=0.4]{->}(5.3357377,-1.202)(5.3357377,1.178)
\psline[linewidth=0.04cm,arrowsize=0.05291667cm 2.0,arrowlength=1.4,arrowinset=0.4]{->}(2.1757379,1.558)(4.9757376,1.558)
\psline[linewidth=0.04cm,arrowsize=0.05291667cm 2.0,arrowlength=1.4,arrowinset=0.4]{->}(2.1557379,1.458)(6.655738,0.118)
\psline[linewidth=0.04cm,arrowsize=0.05291667cm 2.0,arrowlength=1.4,arrowinset=0.4]{->}(2.1157377,1.378)(5.017565,-1.34)
\psline[linewidth=0.04cm,arrowsize=0.05291667cm 2.0,arrowlength=1.4,arrowinset=0.4]{->}(2.0957377,-1.342)(5.115738,1.318)
\psline[linewidth=0.04cm,arrowsize=0.05291667cm 2.0,arrowlength=1.4,arrowinset=0.4]{->}(2.1557379,-1.462)(6.695738,-0.102)
\psline[linewidth=0.04cm,arrowsize=0.05291667cm 2.0,arrowlength=1.4,arrowinset=0.4]{->}(2.2157378,-1.702)(4.9157376,-1.722)
\psellipse[linewidth=0.044,linestyle=dotted,dotsep=0.16cm,dimen=middle](5.879799,0.0)(1.78,2.66)
\psellipse[linewidth=0.044,linestyle=dotted,dotsep=0.16cm,dimen=middle](1.7557378,-0.0020000006)(1.24,2.62)
\usefont{T1}{ptm}{m}{n}
\rput{89.421364}(0.07857694,-0.27507597){\rput(0.16728503,-0.07441322){The first component}}
\usefont{T1}{ptm}{m}{n}
\rput{-89.37804}(7.7639565,8.184342){\rput(8.007876,0.19126037){The second component}}
\psline[linewidth=0.04cm,arrowsize=0.05291667cm 2.0,arrowlength=1.4,arrowinset=0.4]{->}(1.5757377,1.138)(1.6157378,-1.222)
\psline[linewidth=0.04cm,arrowsize=0.05291667cm 2.0,arrowlength=1.4,arrowinset=0.4]{->}(1.8157378,-1.182)(1.7957379,1.078)
\end{pspicture} 
}
}
\caption{A partially planar digraph that is not planar}
\label{fig:partially_planar}
\end{figure}

\vskip 10pt 
\begin{proposition}
\label{pro:line_digraph_planar}
There is a polynomial time algorithm to recognize the line digraph of a 
partially planar digraph and subsequently determine its min-rank.
\end{proposition}
\begin{proof}
\mbox{}
\begin{enumerate}
\item {\bf Recognition Phase:} \\
There is 
a one-to-one correspondence between the set of strongly connected components 
of order at least two of $\D$ and the set of strongly connected components
of $\LD$ in the following sense. If $\D_i$'s, $i \in [k]$, are all strongly connected components of $\D$
each of which contains at least two vertices, 
then $\LL(\D_i)$'s, $i \in [k]$, are all strongly connected components of $\LD$. 
Therefore, to determine whether 
a given digraph $\LL$ is the line digraph of a partially planar digraph, 
it suffices to determine whether each of its strongly connected components 
$\LL_i$ ($i \in [k]$) is the line digraph of a planar digraph.  
Note also that we can find all strongly connected components of a digraph 
in time linear in the number of edges \cite{Tarjan1972}. 

For each $i \in [k]$, employing a polynomial time algorithm, 
we can determine whether $\LL_i$ is a line digraph of a digraph \cite{Syslo1982}. 
If the answer is YES, then the algorithm also outputs a digraph $\D'_i$, which 
is a root digraph of $\LL_i$ and is strongly connected. 

Suppose $\LL = \LL(\D)$, where $\D$ is a digraph. 
Moreover, let $\LL_i = \LL(\D_i)$, where $\D_i$'s, $i \in [k]$, are all 
strongly connected components of $\D$ of order at least two. 
By \cite[Theorem 3]{HararyNorman1960}, $\D'_i$ and $\D_i$ are actually isomorphic, $i \in [k]$. 
Hence, to complete the Recognition Phase, one needs to test the planarity of $\D'_i$ for every $i \in [k]$. 
It is well known that this task can be done in time linear in the 
size of $\D$~\cite{HopcroftTarjan1974}. 
Thus, the Recognition Phase can be done in polynomial time. 

\item {\bf Min-Rank Computation Phase:}\\
Upon the completion of the Recognition Phase, if it is confirmed that $\LL$ is indeed the line digraph
of a partially planar digraph, then the second phase is executed to compute $\mrq(\LL)$. 
We show that this phase can also be done in polynomial time. 
Indeed, by Lemma~\ref{lem:strongly_connected_component}, it suffices to show that
$\mrq(\LL_i)$ for $i \in [k]$ can be found in polynomial time. 

On the one hand, since $\D'_i$ (which is isomorphic to $\D_i$) 
is planar, as shown by Lucchesi and Younger~\cite{LucchesiYounger1978}, $\nu_1(\D'_i) = \tau_1(\D'_i)$. 
Therefore, by Proposition~\ref{pro:digraph_line-digraph}, 
\[
\mrq(\LL_i) = |\E(\D'_i)| - \nu_1(\D'_i).
\] 
On the other hand, $\nu_1(\D'_i)$ can be computed in polynomial time \cite{Lucchesi1976}.  
Therefore $\mrq(\LL_i)$ for each $i \in [k]$ can be computed in polynomial time. 
Thus, $\mrq(\LL)$ can be found in polynomial time. \qedhere
\end{enumerate}
\end{proof} 
\vskip 10pt 

In summary, we have the following. 

\vskip 10pt 
\begin{corollary} 
There are polynomial time algorithms to recognize a member and subsequently determine the min-rank of that member of the following families of digraphs: connectively reducible digraphs (which includes fully reducible flow digraphs and cyclically reducible digraphs), and line digraphs of partially planar digraphs. 
\end{corollary}

\section{Conclusion and Open Problems}
\label{sec:conclusion}

We have characterized the ICSI instances whose
optimal scalar linear index codes have near-extreme 
transmission rates. 
Except for one case, these ICSI instances are also those 
that have near-extreme \emph{vector nonlinear} transmission rates. 
We have also introduced an upper bound on min-ranks of
digraphs. Based on this bound, we have discovered several new
families of digraphs whose min-ranks can be found in polynomial time. 

We state below a couple of interesting open problems
for future research.\\
 
\nin {\bf Open Problem I:} Examine the hardness of the decision problem whether a given digraph has min-rank two over a \emph{nonbinary} field $\fq$. \\

\nin {\bf Open Problem II:} Examine the hardness of the problem 
of finding $\bt_q(\D)$ for a given digraph $\D$. \\

\nin {\bf Open Problem III:} Find new families of digraphs whose 
min-ranks can be found in polynomial time.

\section{Acknowledgment}
The authors wish to thank M. Langberg for providing the preprints 
\cite{HavivLangberg2011, BerlinerLangberg2011}. 

\bibliographystyle{IEEEtran}
\bibliography{IndexCodes_of_ExtremeRates}

\section{Appendix}
\begin{proof}[Proof of Theorem~\ref{thm:mr_n-2}]
For the ONLY IF direction, suppose that $\mrq(\G) = n-2$. 
By the maximum-matching bound, $n - 2 \leq n - \mG$. 
Hence $\mG \leq 2$. As $\mG \in \{0,1\}$ and $|\VG| \geq 6$ imply that
either $\G$ has no edges ($\mrq(\G)=n > n -2$) or $\G$ is a star graph 
($\mrq(\G)=n-1 > n-2$), we deduce that $\mG = 2$. 
Moreover, as the graph $F$ has min-rank
\emph{three} less than its order, $\G$ should not contain any subgraph
isomorphic to $F$. Indeed, suppose for otherwise that $F'$ is a subgraph of 
$\G$ and $\F'$ is isomorphic to $F$. 

Consider the following block diagonal matrix $\bM$ with two blocks $\bB_1$ 
and $\bB_2$. 
The first block $\bB_1$, a $6 \times 6$ matrix, corresponds to the rows and 
columns labeled by the vertices in $F'$. 
Moreover, we choose $\bB_1$ so that it has $q$-rank three. 
This is possible since $F'$ is isomorphic to $F$ and $\mrq(F) = 3$.
(Note that $3 = \alpha(F) \leq \mrq(F) \leq \cc(F) = 3$ implies that $\mrq(F) = 3$.)
The second block $\bB_2$ is chosen to be an $(n - 6) \times (n-6)$ identity matrix.  
It corresponds to the rows and columns labeled by the vertices in $\VG \setminus \V(F')$.
Then $\bM$ fits $\G$ and moreover, 
\[
\begin{split}
\rank(\bM) &= \rank(\bB_1) + \rank(\bB_2)\\
&= 3 + (n - 6)\\
&= n - 3.
\end{split}
\]
This implies that $\mrq(\G) \leq n - 3 < n - 2$, which is impossible. 
    
We now turn to the IF direction. 
Suppose that $\mG = 2$ and $\G$ does not contain any subgraph isomorphic
to $F$. Then by the maximum-matching bound, 
$\mrq(\G) \leq n - 2$. As $\aG \leq \mrq(\G)$, it suffices to show that $\aG = n - 2$. 

Let $\{a,b\}$ and $\{c,d\}$ be the two edges of a maximum matching $M$ in $\G$.
Let $U = \{a,b,c,d\}$ and $V = \VG \setminus U$. 
As $\G$ has at least six vertices, suppose that $V = \{f,g,\ldots\}$, where $f \neq g$. 
Since $M$ is a maximum matching, $V$ must be an independent set in $\G$.
The idea is to show that we can always find two nonadjacent vertices in $U$ that are not 
adjacent to any vertex in $V$. Such two vertices can be added to $V$ to 
obtain an independent set of size $n-2$, which establishes the proof.
We refer to such a pair of vertices as an \emph{independent pair}. 

For disjoint subsets $I$ and $J$ of $\VG$, let 
\[
s_\G(I, J) = \big| \big\{ \{i,j\}: \ i \in I, \ j \in J, \ \{i,j\} \in \EG\big\} \big|.
\] 
Based on how the vertices in $U$ are connected to each other, we consider the 
following five cases. Note that we only consider non-isomorphic configurations.\\
 
\nin {\bf Case 1:} $s_\G(\{a,b\},\{c,d\}) = 0$. 
\begin{figure}[h]
\centering
\scalebox{1} % Change this value to rescale the drawing.
{
\begin{pspicture}(0,-1.2129687)(3.4828124,1.2129687)
\psline[linewidth=0.04cm,dotsize=0.07055555cm 2.0]{**-**}(0.1809375,0.69453126)(1.3409375,0.71453124)
\psline[linewidth=0.04cm,dotsize=0.07055555cm 2.0]{**-**}(2.0609374,0.71453124)(3.2609375,0.71453124)
\psline[linewidth=0.04cm,linestyle=dashed,dash=0.16cm 0.16cm,dotsize=0.07055555cm 2.0]{-**}(0.2209375,0.69453126)(1.2809376,-0.80546874)
\psline[linewidth=0.04cm,linestyle=dashed,dash=0.16cm 0.16cm,dotsize=0.07055555cm 2.0]{-**}(2.1209376,0.71453124)(2.1609375,-0.8254688)
\psline[linewidth=0.04cm,linestyle=dashed,dash=0.16cm 0.16cm](2.1009376,0.7345312)(1.2409375,-0.7054688)
\psline[linewidth=0.04cm,linestyle=dashed,dash=0.16cm 0.16cm](1.2609375,0.69453126)(1.2409375,-0.72546875)
\usefont{T1}{ptm}{m}{n}
\rput(0.22234374,1.0045313){$a$}
\usefont{T1}{ptm}{m}{n}
\rput(1.2723438,1.0245312){$b$}
\usefont{T1}{ptm}{m}{n}
\rput(2.1023438,1.0245312){$c$}
\usefont{T1}{ptm}{m}{n}
\rput(3.1723437,1.0245312){$d$}
\usefont{T1}{ptm}{m}{n}
\rput(1.2023437,-1.0354687){$f$}
\usefont{T1}{ptm}{m}{n}
\rput(2.1023438,-1.0354687){$g$}
\end{pspicture} 
}
\caption{Case 1}
\label{fig:conf1}
\end{figure}

There are four candidates for an independent pair, namely
$\{a,c\}$, $\{a,d\}$, $\{b,c\}$, $\{b,d\}$. All of these pairs
fail to be an independent pair if and only if either both $a$ 
and $b$ are adjacent to some vertices in $V$ or both $c$ 
and $d$ are adjacent to some vertices in $V$. We show that
either case never happens, by contradiction. 

Suppose both $a$ and $b$ are adjacent to some vertices in $V$. 
(The case when both $c$ and $d$ are adjacent to some vertices in $V$
is investigated analogously.)  
Without loss of generality, assume that $a$ and $f$ are adjacent. 
Then $b$ must be adjacent to $f$ but not to any other vertex in $\V$. 
Indeed, if $b$ is adjacent to $h \in V$, $h \neq f$, 
then the set of three edges $\{a,f\}$, $\{b,h\}$, and $\{c,d\}$
form a matching of size three, which is impossible since $\mG = 2$.
Similarly, $a$ should not be adjacent to any other vertex in $V$ rather
than $f$. 

As $\G$ is connected, $f$ must be adjacent to either $c$ or $d$.
Without loss of generality, suppose $f$ and $c$ are adjacent. On the one hand, since $\G$
is connected, $g$ must be adjacent to some vertex in $U$. 
On the other hand, $g$ cannot be adjacent to any vertex in $U$, as
\begin{itemize}
	\item if $g$ and $a$ are adjacent, then $\{a,g\}$, $\{b,f\}$, and 
	$\{c,d\}$ form a matching of size three, which is impossible; 
	\item if $g$ and $b$ are adjacent, then $\{a,f\}$, $\{b,g\}$, and 
	$\{c,d\}$ form a matching of size three, which is impossible;
	\item if $g$ and $c$ are adjacent, then $\G$ has a subgraph isomorphic to $F$ (see Fig.~\ref{fig:conf1}), which is impossible; 
	\item if $g$ and $d$ are adjacent, then $\{a,b\}$, $\{c,f\}$, and 
	$\{d,g\}$ form a matching of size three, which is impossible.  
\end{itemize}
We obtain a contradiction. \\

\nin {\bf Case 2:} $s_\G(\{a,b\},\{c,d\}) = 1$. 
Without loss of generality, suppose that $\{b,c\}$ is the only edge that connects
$\{a,b\}$ and $\{c,d\}$. 

There are three candidates for an independent pair, namely
$\{a,c\}$, $\{a,d\}$, and $\{b,d\}$. All of these three pairs 
fail to be an independent pair only if at least one of the pairs
$\{a,b\}$, $\{a,d\}$, and $\{c,d\}$ has both vertices adjacent
to some vertices in $V$. We show below that this scenario 
cannot happen. 

\begin{enumerate}
	\item Assume that both $a$ and $b$ are adjacent to some vertices in $V$. 
	\begin{figure}[h]
	\centering
	\scalebox{1} % Change this value to rescale the drawing.
{
\begin{pspicture}(0,-1.2129687)(3.4828124,1.2129687)
\psline[linewidth=0.04cm,dotsize=0.07055555cm 2.0]{**-**}(0.1809375,0.69453126)(1.3409375,0.71453124)
\psline[linewidth=0.04cm,dotsize=0.07055555cm 2.0]{**-**}(2.0609374,0.71453124)(3.2609375,0.71453124)
\psline[linewidth=0.04cm,linestyle=dashed,dash=0.16cm 0.16cm,dotsize=0.07055555cm 2.0]{-**}(0.2209375,0.69453126)(1.2809376,-0.80546874)
\psline[linewidth=0.04cm,linestyle=dashed,dash=0.16cm 0.16cm,dotsize=0.07055555cm 2.0]{-**}(2.1209376,0.71453124)(2.1609375,-0.8254688)
\psline[linewidth=0.04cm,linestyle=dashed,dash=0.16cm 0.16cm](1.2609375,0.69453126)(1.2409375,-0.72546875)
\usefont{T1}{ptm}{m}{n}
\rput(0.22234374,1.0045313){$a$}
\usefont{T1}{ptm}{m}{n}
\rput(1.2723438,1.0245312){$b$}
\usefont{T1}{ptm}{m}{n}
\rput(2.1023438,1.0245312){$c$}
\usefont{T1}{ptm}{m}{n}
\rput(3.1723437,1.0245312){$d$}
\usefont{T1}{ptm}{m}{n}
\rput(1.2023437,-1.0354687){$f$}
\usefont{T1}{ptm}{m}{n}
\rput(2.1023438,-1.0354687){$g$}
\psline[linewidth=0.04cm](1.2809376,0.71453124)(2.1409376,0.71453124)
\end{pspicture} 
}
\caption{Sub-case 1}
\label{fig:subcase1}
	\end{figure}
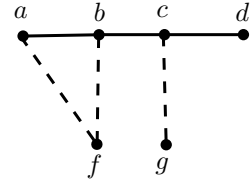
	%%%%%%%%
Suppose without loss of generality that $a$ and $f$ are adjacent. 
Then the same argument as in Case 1 establishes that $b$ must be adjacent to $f$ but not to any other vertex in $V$. 
On the one hand, as $\G$ is connected, $g$ must be adjacent to 
some vertex in $U$. 
On the other hand, as $\mG = 2$, $g$ should not be adjacent to any vertex among $a$, $b$, and $d$.
Moreover, $g$ and $c$ cannot be adjacent, for otherwise $\G$ would contain a subgraph isomorphic
to $F$ (see Fig.~\ref{fig:subcase1}).   
We obtain a contradiction.  	
	
\item Assume that both $a$ and $d$ are adjacent to some vertices in $V$. 
	\begin{figure}[h]
	\centering	
	\scalebox{1} % Change this value to rescale the drawing.
{
\begin{pspicture}(0,-1.2129687)(3.4828124,1.2129687)
\psline[linewidth=0.04cm,dotsize=0.07055555cm 2.0]{**-**}(0.1809375,0.69453126)(1.3409375,0.71453124)
\psline[linewidth=0.04cm,dotsize=0.07055555cm 2.0]{**-**}(2.0609374,0.71453124)(3.2609375,0.71453124)
\psline[linewidth=0.04cm,linestyle=dashed,dash=0.16cm 0.16cm,dotsize=0.07055555cm 2.0]{-**}(0.2209375,0.69453126)(1.2809376,-0.80546874)
\usefont{T1}{ptm}{m}{n}
\rput(0.22234374,1.0045313){$a$}
\usefont{T1}{ptm}{m}{n}
\rput(1.2723438,1.0245312){$b$}
\usefont{T1}{ptm}{m}{n}
\rput(2.1023438,1.0245312){$c$}
\usefont{T1}{ptm}{m}{n}
\rput(3.1723437,1.0245312){$d$}
\usefont{T1}{ptm}{m}{n}
\rput(1.2023437,-1.0354687){$f$}
\usefont{T1}{ptm}{m}{n}
\rput(2.1023438,-1.0354687){$g$}
\psline[linewidth=0.04cm](1.2809376,0.71453124)(2.1409376,0.71453124)
\psdots[dotsize=0.152](2.1609375,-0.7654688)
\psline[linewidth=0.04cm,linestyle=dashed,dash=0.16cm 0.16cm](3.1809375,0.71453124)(1.2809376,-0.68546873)
\end{pspicture} 
}
\caption{Sub-case 2}
\label{fig:subcase2}
 \end{figure}
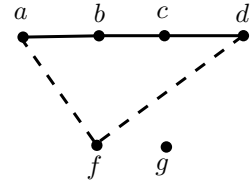
Suppose without loss of generality that $a$ and $f$ are adjacent. 
As there are no matchings of size three in $\G$, $d$ is adjacent to $f$ but not to any other
vertex in $V$. Also, 
$g$ is not adjacent to any vertex in $U$. However, this would 
imply that $g$ is an isolated vertex of $\G$, which is impossible
as $\G$ is connected.

\item Assume that both $c$ and $d$ are adjacent to some vertices in $V$. 
This sub-case is completely similar to the first sub-case. 
\end{enumerate}

\nin {\bf Case 3:} $s_\G(\{a,b\},\{c,d\}) = 2$ and the two edges that connect $\{a,b\}$ and $\{c,d\}$ share one common vertex. Without loss of 
generality suppose that these two edges are $\{b,c\}$ and $\{b,d\}$. 

There are two candidates for an independent pair, namely
$\{a,c\}$ and $\{a,d\}$. It suffices to show that $a$ is not adjacent to any
vertex in $V$ and either $c$ or $d$ is not adjacent to any vertex
in $V$. 

Suppose that $a$ is adjacent to a vertex, say $f$, in $V$. 
\begin{figure}[h]
\centering
\scalebox{1} % Change this value to rescale the drawing.
{
\begin{pspicture}(0,-1.47375)(3.7028127,1.45375)
\psline[linewidth=0.04cm,dotsize=0.07055555cm 2.0]{**-**}(0.19953126,0.43375)(1.3595313,0.45375)
\psline[linewidth=0.04cm,dotsize=0.07055555cm 2.0]{**-**}(2.0795312,0.45375)(3.2795312,0.45375)
\usefont{T1}{ptm}{m}{n}
\rput(0.22234374,0.74375){$a$}
\usefont{T1}{ptm}{m}{n}
\rput(1.0723437,0.76375){$b$}
\usefont{T1}{ptm}{m}{n}
\rput(2.1023438,0.76375){$c$}
\usefont{T1}{ptm}{m}{n}
\rput(3.3923438,0.76375){$d$}
\usefont{T1}{ptm}{m}{n}
\rput(1.2023437,-1.29625){$f$}
\usefont{T1}{ptm}{m}{n}
\rput(2.1023438,-1.29625){$g$}
\psline[linewidth=0.04cm](1.2995313,0.45375)(2.1595314,0.45375)
\psdots[dotsize=0.152](2.1795313,-1.02625)
\psdots[dotsize=0.152](1.2795312,-1.02625)
\psarc[linewidth=0.04](2.2495313,0.48375){0.95}{0.0}{180.0}
\psline[linewidth=0.04cm,linestyle=dashed,dash=0.16cm 0.16cm](2.1595314,-1.02625)(2.1595314,-1.08625)
\psline[linewidth=0.04cm,linestyle=dashed,dash=0.16cm 0.16cm](0.26,0.43375)(1.26,-0.98625)
\psline[linewidth=0.04cm,linestyle=dashed,dash=0.16cm 0.16cm](0.28,0.45375)(2.18,-1.02625)
\end{pspicture} 
}
\caption{}
\label{fig:a}
\end{figure} 
%%%%%%%%%%%%%%%
As $\mG = 2$, we deduce that $g$ is not adjacent to any vertex among $b$, $c$, and $d$. Also, since $\G$ does not contain a 
subgraph isomorphic to $F$, we deduce that $g$ cannot be adjacent to $a$ (see Fig.~\ref{fig:a}). Hence $g$ is an isolated vertex of $\G$, which
is impossible as $\G$ is connected. 

Now suppose that both $c$ and $d$ are adjacent to some vertices
in $V$. Without loss of generality, suppose that $c$ is adjacent to 
$f$. 
\begin{figure}[h]
\centering
\scalebox{1} % Change this value to rescale the drawing.
{
\begin{pspicture}(0,-1.47375)(3.7028124,1.45375)
\psline[linewidth=0.04cm,dotsize=0.07055555cm 2.0]{**-**}(0.1809375,0.43375)(1.3409375,0.45375)
\psline[linewidth=0.04cm,dotsize=0.07055555cm 2.0]{**-**}(2.0609374,0.45375)(3.2609375,0.45375)
\usefont{T1}{ptm}{m}{n}
\rput(0.22234374,0.74375){$a$}
\usefont{T1}{ptm}{m}{n}
\rput(1.0723437,0.76375){$b$}
\usefont{T1}{ptm}{m}{n}
\rput(2.1023438,0.76375){$c$}
\usefont{T1}{ptm}{m}{n}
\rput(3.3923438,0.76375){$d$}
\usefont{T1}{ptm}{m}{n}
\rput(1.2023437,-1.29625){$f$}
\usefont{T1}{ptm}{m}{n}
\rput(2.1023438,-1.29625){$g$}
\psline[linewidth=0.04cm](1.2809376,0.45375)(2.1409376,0.45375)
\psdots[dotsize=0.152](2.1609375,-1.02625)
\psdots[dotsize=0.152](1.2609375,-1.02625)
\psarc[linewidth=0.04](2.2309375,0.48375){0.95}{0.0}{180.0}
\psline[linewidth=0.04cm,linestyle=dashed,dash=0.16cm 0.16cm](2.1209376,0.45375)(1.2809376,-0.96625)
\psline[linewidth=0.04cm,linestyle=dashed,dash=0.16cm 0.16cm](3.1609375,0.43375)(1.2809376,-1.00625)
\psline[linewidth=0.04cm,linestyle=dashed,dash=0.16cm 0.16cm](1.3209375,0.37375)(2.1609375,-1.08625)
\psline[linewidth=0.04cm,linestyle=dashed,dash=0.16cm 0.16cm](2.1409376,-1.02625)(2.1409376,-1.08625)
\end{pspicture} 
}
\caption{}
\label{fig:cd}
\end{figure} 
Then since $\mG = 2$, $d$ must be adjacent to $f$ but not to any other vertex in $V$. Also, $g$ cannot be adjacent to any vertex 
among $a$, $c$, and $d$ for the same reason. Moreover, as $\G$
does not contain a subgraph isomorphic to $F$, we deduce that
$g$ is not adjacent to $b$ (see Fig.~\ref{fig:cd}). 
(Indeed, if $g$ and $b$ are adjacent, then the following subgraph of
$\G$ is isomorphic to $F$: 
its vertex set is $\{a,b,c,d,f,g\}$, and its edge set is 
$\big\{\{c,d\}, \{d,f\}, \{c,f\}, \{c,b\}, \{b,a\}, \{b,g\} \big\}$.)
Therefore, $g$ is an isolated vertex of $\G$. We obtain a contradiction.\\
 
\nin {\bf Case 4:} $s_\G(\{a,b\},\{c,d\}) = 2$ and the two edges that connect $\{a,b\}$ and $\{c,d\}$ share no common vertices. 
Suppose, without loss of generality, that these two edges are $\{a,d\}$ and $\{b,c\}$.  

There are two candidates for an independent pair, namely
$\{a,c\}$ and $\{b,d\}$. Both of these pairs fail to be an 
independent pair if and only if at least one of the four pairs
$\{a,b\}$, $\{a,d\}$, $\{c,b\}$, and $\{c,d\}$ has both vertices
adjacent to some vertices in $V$. 
By symmetry, it suffices to show that the scenario when both
$a$ and $b$ are adjacent to some vertices in $V$ never happens.

Suppose now that $a$ and $b$ are adjacent to some vertices in $V$. 

\begin{figure}[H]
\centering
\scalebox{1} % Change this value to rescale the drawing.
{
\begin{pspicture}(0,-1.0740604)(4.429995,1.7946434)
\psline[linewidth=0.04cm,dotsize=0.07055555cm 2.0]{**-**}(0.92671394,0.9237521)(2.086714,0.9437521)
\psline[linewidth=0.04cm,dotsize=0.07055555cm 2.0]{**-**}(2.8067138,0.9437521)(4.0067143,0.9437521)
\usefont{T1}{ptm}{m}{n}
\rput(0.8295265,1.253752){$a$}
\usefont{T1}{ptm}{m}{n}
\rput(1.9395264,1.253752){$b$}
\usefont{T1}{ptm}{m}{n}
\rput(2.8295264,1.253752){$c$}
\usefont{T1}{ptm}{m}{n}
\rput(4.1195264,1.253752){$d$}
\usefont{T1}{ptm}{m}{n}
\rput(1.9495264,-0.8662479){$f$}
\usefont{T1}{ptm}{m}{n}
\rput(2.8495266,-0.8462479){$g$}
\psline[linewidth=0.04cm](2.0267138,0.9437521)(2.8867137,0.9437521)
\psdots[dotsize=0.152](2.9067137,-0.5362479)
\psdots[dotsize=0.152](2.006714,-0.5362479)
\rput{-33.111652}(0.3979387,1.3386276){\psarc[linewidth=0.04](2.450477,0.0){1.7707293}{66.985695}{180.0}}
\psline[linewidth=0.04cm,linestyle=dashed,dash=0.16cm 0.16cm](2.8867137,-0.5362479)(2.8867137,-0.5962479)
\psline[linewidth=0.04cm,linestyle=dashed,dash=0.16cm 0.16cm](1.026714,0.9037521)(1.986714,-0.47624794)
\psline[linewidth=0.04cm,linestyle=dashed,dash=0.16cm 0.16cm](2.006714,0.9237521)(2.006714,-0.51624787)
\end{pspicture} 
}\caption{Case 4}
\label{fig:case4}
\end{figure}  
Suppose that $a$ and $f$ are adjacent. 
The condition that $\mG = 2$ forces $b$ to be adjacent to $f$ but not 
to any other vertex in $V$. That condition also implies that $g$
must be an isolated vertex in $\G$, which is impossible as $\G$ is
connected.\\

\nin {\bf Case 5:} $s_\G(\{a,b\},\{c,d\}) = 3$. Without loss of generality, suppose that $\{a,d\}$, $\{b,c\}$, and $\{b,d\}$ are 
the edges that connect $\{a,b\}$ and $\{c,d\}$. The only candidate
for an independent pair is $\{a,c\}$. We prove by contradiction that
both $a$ and $c$ are not adjacent to any vertex in $V$. By symmetry,
it suffices to verify this property for only one of them. 

Suppose that $a$ is adjacent to some vertex in $V$. Let $a$ be 
adjacent to $f$. 

\begin{figure}[h]
\centering
\scalebox{1} % Change this value to rescale the drawing.
{
\begin{pspicture}(0,-1.4873681)(4.156894,1.4678612)
\psline[linewidth=0.04cm,dotsize=0.07055555cm 2.0]{**-**}(0.67220664,0.4904443)(1.8322066,0.5104443)
\psline[linewidth=0.04cm,dotsize=0.07055555cm 2.0]{**-**}(2.5522065,0.5104443)(3.7522066,0.5104443)
\usefont{T1}{ptm}{m}{n}
\rput(0.67642534,0.8004443){$a$}
\usefont{T1}{ptm}{m}{n}
\rput(1.5264252,0.8204443){$b$}
\usefont{T1}{ptm}{m}{n}
\rput(2.5564253,0.8204443){$c$}
\usefont{T1}{ptm}{m}{n}
\rput(3.8464253,0.8204443){$d$}
\usefont{T1}{ptm}{m}{n}
\rput(1.7164253,-1.2795556){$f$}
\usefont{T1}{ptm}{m}{n}
\rput(2.6364255,-1.2595556){$g$}
\psline[linewidth=0.04cm](1.7722068,0.5104443)(2.6322067,0.5104443)
\psdots[dotsize=0.152](2.6522067,-0.9695557)
\psdots[dotsize=0.152](1.7522067,-0.9695557)
\rput{-27.002811}(0.331856,0.98395854){\psarc[linewidth=0.04](2.2149477,-0.19908594){1.6467787}{54.036804}{180.0}}
\psline[linewidth=0.04cm,linestyle=dashed,dash=0.16cm 0.16cm](2.6322067,-0.9695557)(2.6322067,-1.0295557)
\psline[linewidth=0.04cm,linestyle=dashed,dash=0.16cm 0.16cm](0.7326753,0.4904443)(1.7326753,-0.9295557)
\psline[linewidth=0.04cm,linestyle=dashed,dash=0.16cm 0.16cm](0.75267535,0.5104443)(2.6526754,-0.9695557)
\rput{-207.20683}(5.5443707,0.5693966){\psarc[linewidth=0.04](2.7032914,0.9555354){1.0831244}{54.036804}{180.0}}
\end{pspicture} 
}
\caption{Case 5}
\label{fig:case5}
\end{figure}
As $\mG = 2$ and $\G$ is connected, $g$ must
be adjacent to $a$. However, $\G$ now contains a subgraph whose edge set
consists of $\{b,c\},\{b,d\},\{c,d\},\{b,a\},\{a,f\},\{a,g\},$ which is isomorphic to $F$ (see Fig.~\ref{fig:case5}). This contradicts our assumption. \\

\nin {\bf Case 6:} $s_\G(\{a,b\},\{c,d\}) = 4$. In this case, the subgraph of 
$\G$ induced by $\{a,b,c,d\}$ is a complete graph. 

\begin{figure}[h]
\centering
\scalebox{1} % Change this value to rescale the drawing.
{
\begin{pspicture}(0,-1.5073681)(4.1383004,1.4678612)
\psline[linewidth=0.04cm,dotsize=0.07055555cm 2.0]{**-**}(0.6722067,0.49044433)(1.8322067,0.51044434)
\psline[linewidth=0.04cm,dotsize=0.07055555cm 2.0]{**-**}(2.5522065,0.51044434)(3.7522066,0.51044434)
\usefont{T1}{ptm}{m}{n}
\rput(0.65783167,0.8004443){$a$}
\usefont{T1}{ptm}{m}{n}
\rput(1.5078316,0.82044435){$b$}
\usefont{T1}{ptm}{m}{n}
\rput(2.5378315,0.82044435){$c$}
\usefont{T1}{ptm}{m}{n}
\rput(3.8278315,0.82044435){$d$}
\usefont{T1}{ptm}{m}{n}
\rput(1.6978315,-1.3195555){$f$}
\usefont{T1}{ptm}{m}{n}
\rput(2.6178317,-1.2795556){$g$}
\psline[linewidth=0.04cm](1.7722069,0.51044434)(2.6322067,0.51044434)
\psdots[dotsize=0.152](2.6522067,-0.9695557)
\psdots[dotsize=0.152](1.7522068,-0.9695557)
\rput{-27.002811}(0.331856,0.98395854){\psarc[linewidth=0.04](2.2149477,-0.1990859){1.6467787}{54.036804}{180.0}}
\psline[linewidth=0.04cm,linestyle=dashed,dash=0.16cm 0.16cm](2.6322067,-0.9695557)(2.6322067,-1.0295557)
\rput{-207.20683}(5.544371,0.5693964){\psarc[linewidth=0.04](2.703292,0.95553535){1.0831244}{54.036804}{180.0}}
\rput{-207.20683}(3.627337,1.0126611){\psarc[linewidth=0.04](1.6911422,0.94521755){1.0592493}{54.036804}{180.0}}
\end{pspicture} 
}
\caption{Case 6}
\end{figure}

As $\G$ is connected, both $f$ and $g$ must be adjacent to 
some vertices in $U$. 
If $f$ and $g$ are adjacent to the same vertex in $U$, 
then $\G$ contains a subgraph isomorphic to $F$, which contradicts
our assumption. For instance, if both $f$ and $g$ are adjacent to $a$, 
then this subgraph has vertex set $\{a,b,c,d,f,g\}$ 
and edge set consisting of the edges $\{b,c\}, \{c,d\}, \{b,d\}, \{b,a\}, \{a,f\}, \{a,g\}$.
It is also easy to verify that if $f$ and $g$ are adjacent to different vertices in $U$, then 
$\G$ contains a matching of size three. This contradicts our assumption that
$\mG = 2$.   
Thus, Case 6 never happens. 
\end{proof} 

\end{document}